\renewenvironment{abstract}
{\small\vspace{-1em}
\begin{center}
\bfseries\abstractname\vspace{-.5em}\vspace{0pt}
\end{center}
\list{}{
\setlength{\leftmargin}{0.6in}\setlength{\rightmargin}{\leftmargin}}\item\relax}
{\endlist}
\declaretheorem[name=Theorem, numberwithin=section]{theorem}
\declaretheorem[name=Lemma, sibling=theorem]{lemma}
\declaretheorem[name=Proposition, sibling=theorem]{proposition}
\declaretheorem[name=Corollary, sibling=theorem]{corollary}
\declaretheorem[name=Claim, sibling=theorem]{claim}
\declaretheorem[name=Question, style=remark, sibling=theorem]{question}
\def\cqedsymbol{\ifmmode$\lrcorner$\else{\unskip\nobreak\hfil
\penalty50\hskip1em\null\nobreak\hfil$\lrcorner$
\parfillskip=0pt\finalhyphendemerits=0\endgraf}\fi} 
\newcommand{\tri}{\begin{tikzpicture}[scale=0.2]
    \node[draw, circle] (A) at (30:2) {};
    \node[draw, circle] (B) at (150:2) {};
    \node[draw, circle] (C) at (-90:2) {};
    \draw[thick] (0,0) circle (3.7cm);
    \draw[->] (A) to (B);
    \draw[->] (B) to (C);
    \draw[->] (C) to (A);
  \end{tikzpicture}}
\let\le\leqslant
\let\ge\geqslant
\definecolor{ForestGreen}{RGB}{0,205,102}
\title{The smallest 5-chromatic tournament
\footnote{The first author was supported by the European Research Council (ERC) under the European Union's Horizon 2020 research and innovation programme Grant Agreement 714704. \\
The third and fourth authors were supported by the MUNI Award in Science and Humanities of the Grant Agency of Masaryk University. \\
The third author was also supported by the project GA20-09525S of the Czech Science Foundation. \\
The second and fourth authors were supported by ANR project GrR (ANR-18-CE40-0032).}}
\author[1,4]{Thomas Bellitto}
\author[2]{Nicolas Bousquet}
\author[3,5]{Adam Kabela}
\author[2,5]{Théo Pierron}
\affil[1]{Sorbonne Université, CNRS, LIP6, F-75005 Paris, France, \texttt{thomas.bellitto@lip6.fr}}
\affil[2]{Univ. Lyon, Université Lyon 1, LIRIS UMR CNRS 5205, F-69621, Lyon, France, \texttt{firstname.lastname@univ-lyon1.fr}}
\affil[3]{Faculty of Applied Sciences, University of West Bohemia, Pilsen, Czech Republic, \texttt{kabela@kma.zcu.cz}}
\affil[4]{Previous affiliation: Faculty of Mathematics, Informatics and Mechanics, University of Warsaw, Poland}
\affil[5]{Previous affiliation: Faculty of Informatics, Masaryk University, Brno, Czech Republic}
\date{}
\begin{document}

\maketitle

\begin{abstract}
A coloring of a digraph is a partition of its vertex set such that each class induces a digraph with no directed cycles.
A digraph is $k$-chromatic if $k$ is the minimum number of classes in such partition, 
and a digraph is oriented if there is at most one arc between each pair of vertices.
Clearly, the smallest $k$-chromatic digraph is the complete digraph on $k$ vertices,
but determining the order of the smallest $k$-chromatic oriented graphs is a challenging problem.
It is known that the smallest $2$-, $3$- and $4$-chromatic oriented graphs have $3$, $7$ and $11$ vertices, respectively.
In 1994, Neumann-Lara conjectured that a smallest $5$-chromatic oriented graph has $17$ vertices.
We solve this conjecture and show that the correct order is $19$.

{\bf MS Classification: 05C20}
\end{abstract}

\section{Introduction}\label{sec:intro}

Finding proper colorings of graphs lies among the most studied problems in graph theory. The goal consists in coloring vertices so that adjacent ones receive distinct colors.
In \cite{def_dicoloring}, Neumann-Lara introduced a generalization of this problem to digraphs. A digraph consists of a vertex set $V$ plus a set of (ordered) pairs of vertices called \emph{arcs}.
Graphs can be seen as a special case of digraphs where, for every arc $uv$, there also exists an arc $vu$ (such digraphs are called symmetric and these pairs of arcs are called \emph{digons}). 

Neumann-Lara defines a proper coloring of a digraph as a partition of the vertex set into acyclic sets (i.e., subsets of vertices which do not contain any oriented cycle). Note that when all the arcs come in digons, this notion indeed reduces to the usual graph coloring definition of unoriented graphs.
The smallest number of colors required to color properly a digraph $D$ is called the \emph{chromatic number of $D$} and will be denoted by $\chi(D)$ in the rest of the paper\footnote{Note that it is sometimes denoted $\overrightarrow{\chi}(D)$, especially when confusion is possible with chromatic number of unoriented graphs. Since we consider only digraphs we prefer keeping the notation as simple as possible.}.
While there exist other generalizations of coloring to digraphs for instance based on graph homomorphisms, see e.g.~\cite{courcelle}, Neumann-Lara's is the most classical one and the one that received ever-growing attention since its introduction.

An \emph{oriented graph} is a digraph that does not contain any digon. Understanding the behavior and structure of graphs of small order and large chromatic number becomes much harder on oriented graphs. Indeed, for any integer $k$, the smallest digraph of chromatic number $k$ is the complete graph on $k$ vertices (i.e. there is a digon between every pair of vertices). On the contrary, determining the order of the smallest oriented graph of chromatic number $k$ was already raised by Neumann-Lara in 1982 in~\cite{NL-tournaments}. 
The goal of this paper is to tackle that problem for $k=5$.

Observe that adding arcs to an oriented graph cannot decrease its chromatic number. Therefore, if there exists an oriented graph of chromatic number $k$ there exists a tournament of chromatic number $k$; a \emph{tournament} being an orientation of the undirected complete graph. Then, the quest of the smallest oriented graph of chromatic number $k$ can be restricted to tournaments. 
A tournament (or by abuse of notations a set of vertices) is \emph{transitive} if for every pair of arcs $uv$ and $vw$, the arc between $uw$ also exists. One can easily observe that a tournament can be colored with $k$ colors if and only if it can be partitioned in $k$ transitive subtournaments. Thus, Neumann-Lara's question can be rephrased as follows:

\begin{question}\label{question}
For every $k$, what is the smallest value of $n_k$ for which there exists a tournament $D$ on $n_k$ vertices which cannot be partitioned into $(k-1)$ transitive subtournaments?
\end{question}

This formulation connects this problem to questions raised by Erd{\H o}s and Moser about 20 years before Neumann-Lara's question~\cite{ErdosMoser}.
The question was asymptotically solved since the maximum chromatic number of a tournament on $n$ vertices is $\Theta(\frac n {\log n})$~\cite{ErdosNL1,ErdosMoser}. However, the question of determining the exact values even for small values of $k$ is still widely open. 

The smallest tournament of chromatic number $2$ (\textit{i.e.} non-transitive) is the directed cycle of length $3$. The constructions for $k=4,5$ rely on the so-called Paley tournaments. For every prime integer $n$ of the form $4k+3$, the \emph{Paley tournaments} on $n$ vertices $\mathrm{Pal}_n$ is the tournament whose vertex set is $\{0,...,n-1\}$ and containing the arc $ij$ if and only if $i-j$ is a square modulo $n$. 
In \cite{NL-tournaments}, Neumann-Lara proved that the smallest tournament of chromatic number $3$ has order $7$ and that there exist four such tournaments, including $\mathrm{Pal}_7$. He also proved that the smallest tournament of chromatic number $4$ has order $11$, is unique and is actually $\mathrm{Pal}_{11}$. 
In the conclusion of \cite{NL-tournaments}, Neumann-Lara discussed the possible order of the smallest $5$-chromatic tournament. He claimed to know that the answer is between $17$ and $19$ and conjectured that it is $17$. Note that the next "natural" candidate, namely the Paley tournament $\mathrm{Pal}_{19}$ is actually $4$-colorable. Neumann-Lara actually published his construction of a $5$-chromatic tournament on $19$ vertices six years later in~\cite{neumann2000dichromatic}.

Even if the question received a considerable attention and was mentioned often as an open problem in the literature in the last $30$ years (see~\cite{dicritical,Kostochka} for recent examples),
determining the exact value of $n_5$ is still open today.

The goal of this paper is to answer this question and provides a definitive answer to Neumann-Lara's question for $k=5$. Namely,

\begin{theorem}
The smallest order of a $5$-chromatic tournament is $19$. 
\end{theorem}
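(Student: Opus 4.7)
My plan is to split the theorem into an upper bound and a lower bound. The upper bound---existence of a $5$-chromatic tournament on $19$ vertices---is already recorded in Neumann-Lara's paper~\cite{neumann2000dichromatic}, so I would reproduce his construction and verify its chromatic number, which, while nontrivial, follows by exhaustively ruling out every candidate partition of its $19$ vertices into $4$ transitive sets. The main content is therefore the lower bound: every tournament on $n \le 18$ vertices can be partitioned into $4$ transitive subtournaments.

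For the lower bound, I would consider a minimum counterexample $T$ and exploit the fact that $T$ is $5$-dicritical, i.e., $\chi(T - v) \le 4$ for every $v$. A standard degree argument then yields $\delta^+(T), \delta^-(T) \ge 4$: if $v$ had out-degree at most $3$, any $4$-coloring of $T-v$ would leave some color class disjoint from $N^+(v)$, and placing $v$ there extends the coloring without creating a directed cycle through $v$. Dicriticality also forces a local obstruction at each vertex: for every $v$ and every $4$-coloring $\phi$ of $T - v$, at least one color class $C_c$ must block $v$'s insertion into the topological order of the transitive tournament $T[C_c]$, meaning there exist $u_i, u_j \in C_c$ with $u_i$ earlier than $u_j$ in that order, $v \to u_i$, and $u_j \to v$. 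Combined with Neumann-Lara's classification---in particular the uniqueness of $\mathrm{Pal}_{11}$ as the smallest $4$-chromatic tournament---one obtains that $T$ must contain a copy of $\mathrm{Pal}_{11}$, which seeds the enumeration.

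The verification then reduces to a computer search. I would enumerate, up to isomorphism (using canonical forms via a tool such as \texttt{nauty}), all tournaments on $17$ and $18$ vertices meeting the degree and structural constraints above, most naturally by extending $\mathrm{Pal}_{11}$ (and any other small $4$-critical tournaments one must account for) by $6$ or $7$ additional vertices and branching on the arcs incident to them. For each candidate $T$, the question of $4$-colorability reduces to a SAT instance with Boolean variables $x_{v,c}$ encoding ``vertex $v$ receives color $c$'', clauses enforcing that each $v$ gets exactly one color, and clauses forbidding a directed triangle monochromatic in any color; recall that in a tournament every directed cycle contains a directed triangle, so triangle-freeness in each class is equivalent to transitivity. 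A modern SAT solver decides each such instance quickly, and an \texttt{UNSAT} certificate for every candidate (verified independently) completes the proof.

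The main obstacle is scale: even after the degree restrictions and the seeding by $\mathrm{Pal}_{11}$, the candidate space on $18$ vertices is still very large, so the pipeline must rely on aggressive symmetry breaking, canonical relabeling during enumeration, and incremental SAT solving to remain tractable. A secondary obstacle is reliability: to trust a negative result of this size, the search should be run by two independent implementations and cross-checked against the known values $n_2 = 3$, $n_3 = 7$, $n_4 = 11$ before being used to conclude $n_5 = 19$.
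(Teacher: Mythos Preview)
Your seeding step has a real gap. The inference ``$T$ is $5$-dicritical and $\mathrm{Pal}_{11}$ is the unique smallest $4$-chromatic tournament, therefore $T$ contains a copy of $\mathrm{Pal}_{11}$'' is a non sequitur: a $4$-\emph{critical} tournament on $13$ or more vertices by definition has no $4$-chromatic proper subtournament, hence in particular no copy of $\mathrm{Pal}_{11}$. What you actually need is that, after deleting a $TT_5$ from a hypothetical $5$-chromatic $T$ on at most $18$ vertices, the remaining $4$-chromatic tournament on at most $13$ vertices contains $\mathrm{Pal}_{11}$. For $12$-vertex remainders (the $n=17$ case) this happens to be true, but it is itself a nontrivial computer-verified statement (Theorem~\ref{thm:contains P11}), not a consequence of dicriticality or of Neumann-Lara's classification. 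For $13$-vertex remainders (the $n=18$ case) you give no argument at all, and indeed the paper never uses $\mathrm{Pal}_{11}$ for $n=18$; your parenthetical hedge ``and any other small $4$-critical tournaments one must account for'' is exactly the missing piece, and filling it is not obviously easier than the original problem.

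Even granting the seed, the search you describe is not feasible as stated. Extending $\mathrm{Pal}_{11}$ by $7$ vertices means orienting $7\cdot 11+\binom{7}{2}=98$ arcs; the automorphism group of $\mathrm{Pal}_{11}$ has order $55$, so symmetry buys almost nothing, and the constraint $\delta^{\pm}\ge 4$ is essentially vacuous on $18$ vertices where the average in/out-degree is $8.5$. The paper takes a completely different route: it proves by hand, via the uniqueness of the $TT_5$-free $13$-vertex tournament $X_{13}$, that any $5$-chromatic $18$-vertex $T$ contains two disjoint copies of $TT_5$, and then splits into the cases of exactly two versus three disjoint $TT_5$'s. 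This decomposes $T$ into overlapping $8$- and $13$-vertex pieces whose $3$- and $4$-chromatic gluings can be enumerated with a branch-and-cut procedure that prunes as soon as a partial orientation already admits a partition into few transitive sets. It is this $TT_5$-based hierarchical decomposition, not a $\mathrm{Pal}_{11}$ seed plus generic SAT, that brings the computation within reach.
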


After presenting some tools in Sections~\ref{sec:tools} and~\ref{sec:12}, we disprove Neumann-Lara's conjecture in Section~\ref{sec:17} by showing that every tournament on $17$ vertices is $4$-colorable. The proof relies on a surprising intermediate result (Theorem~\ref{thm:contains P11}) of independent interest proved by a computer analysis. Namely all the $4$-chromatic tournaments on $12$ vertices contain $\mathrm{Pal}_{11}$ as a subtournament. We derive from it that all the tournaments on $17$ vertices have chromatic number $4$ with a short and human-readable proof. We leave as an open problem a human-readable proof that all the $4$-tournaments on $12$ vertices contain $\mathrm{Pal}_{11}$ as a subtournament. So $n_5 \ge 18$.

We then exhibit an example of a $5$-chromatic tournament on $19$ vertices in Section~\ref{sec:19}, which ensures that $n_5 \le 19$.

We finally present in Section~\ref{sec:18} a computer-assisted proof showing that all the tournaments on $18$ vertices are $4$-colorable, which ensures that $n_5=19$ and settles the case $k=5$.

Note that the number of non-isomorphic tournaments on $17$, $18$ and $19$ vertices have respectively $27$, $31$ and $35$ digits \cite{oeis}, generating them up to isomorphism is already a very challenging task and the problem of $5$-colorability that we need to solve on each of them is NP-complete. Therefore, it is definitely out of reach to solve the problem by bruteforce. Instead, we use the approach summarized in the following sketch. We observe that any $5$-chromatic $18$-vertex tournament must contain two or three (vertex-)disjoint copies of $TT_5$, the transitive tournament of order $5$. In the latter case, we can thus decompose its vertex set as $A_1,A_2,A_3,B$ such that each $A_i$ induces $TT_5$ and $B$ induces a directed triangle. We may observe that $A_i\cup B$ induces a $3$-chromatic tournament on $8$ vertices and that $A_i\cup A_j\cup B$ induces a $4$-chromatic tournament on $13$ vertices. We proceed as follows: 
\begin{enumerate}
    \item We generate the so-called \emph{$8$-completions}, that are the non-isomorphic ways to orient the arcs between a $TT_5$ and a directed triangle so that the resulting tournament is $3$-chromatic. To this end, we use a branching algorithm involving a trimming operation when we detect 
    that the branch will not generate any $3$-chromatic tournament.
There are $256$ such $8$-completions.
    \item For each pair of $8$-completions, we identify their distinguished directed triangle and we generate the possible orientations of the arcs between their respective $TT_5$ so that the result is a $4$-chromatic $13$-vertex tournament (with two distinguished copies of $TT_5$). These are called \emph{$13$-completions}. For each pair of $8$-completions, the maximum amount of $13$-completions is $2072$. However only $4508$ pairs have at least one $13$-completion (with an average of 47.6 completions for each pair), and a quarter of them has precisely one $13$-completion.
    \item We then consider the triples of $8$-completions $(C_1,C_2,C_3)$ where for each $1\leqslant i<j\leqslant 3$, $(C_i,C_j)$ lies among these $4508$ pairs. We generate all the $18$-vertex tournaments obtained by identifying their distinguished directed triangle, and adding the arcs of each $13$-completion between $C_i$ and $C_j$ for $i,j\in\{1,2,3\}$. We finally check whether one of these tournaments is $5$-chromatic. 
\end{enumerate}

When our candidate has exactly two disjoint copies of $TT_5$, note that the remaining $8$-vertex tournament $X_8$ must be $3$-chromatic and without $TT_5$, hence lies among a list of only 94 elements. For each such tournament $X_8$, we proceed as follows:
\begin{enumerate}
    \item We re-use our branching algorithm to generate all orientations of arcs between $X_8$ and $TT_5$ that yield a $4$-chromatic $13$-vertex tournament (and adapt the trimming to cut the branch when we detect two disjoint $TT_5$). 
    \item For each pair of such orientations, we generate a $13$-vertex oriented graph by identifying their common $8$-vertex tournament.
    \item We discard \emph{incompatible pairs}, that are pairs with a $4$-coloring where colors $1,2$ are used only on one $TT_5$ and $3,4$ on the other. Indeed, for these pairs, every orientation of the remaining arcs will stay $4$-colorable. 
    \item For each remaining pair, we try all possible orientations of the remaining $25$ arcs and check whether the resulting tournament is $5$-chromatic (re-using our branching algorithm). 
\end{enumerate}

For $17$-vertex tournaments, we could follow roughly the same approach as in the two $TT_5$ case (since we can show each candidate must contain two copies, but cannot contain three of them). However, we can show by hand that all pairs are incompatible in Step $3$, which directly concludes without using Step $4$. To this end, we do not need to consider the full output of Step 2, but only the intermediate result stated in Section~\ref{sec:12}.

\paragraph{The case $k=6$.} It was already known to Neumann-Lara that $n_6\leqslant 26$ since there is a $TT_6$-free tournament on $26$-vertices, which is thus not $5$-colorable. Our result actually implies that $n_6\geqslant 24$ since every tournament on $23$ vertices must contain $TT_5$, and the remaining vertices induce a $4$-colorable tournament. Besides these easy observations, nothing seems to be known about the exact value of $n_6$.

\paragraph{Related work.}

In \cite{dicritical}, Bang-Jensen \textit{et al.} establish some structural results about \emph{$k$-critical digraphs}, \textit{i.e.} digraphs with chromatic number $k$ that are minimal by inclusion. 
The average degree of such digraphs was also source of attention in recent years. In \cite{Hoshino,Kostochka}, the authors provide some bounds on the smallest possible value of this parameter among all $k$-critical digraphs on $n$ vertices. Note that the question is easily answered without the dependency in $n$ since each vertex of a $k$-critical digraph needs to have in- and out-degree at least $k-1$ and this value is reached by complete digraphs on $k$ vertices. But here again, the question becomes much more difficult when digons are forbidden: the smallest average degree of oriented $k$-critical graphs is still open even for $k=3$~\cite{4.67}.

These works are also reminiscent of numerous works in the undirected case that look for the smallest graph of chromatic number $k$ that does not contain any complete subgraph of order $c$. 
The problem has been especially well-studied for triangle-free graphs (the case $c=3$), since this is the smallest value of $c$ that makes the problem non-trivial. 
In \cite{Chvatal}, Chvátal proved that the smallest triangle-free $k$-chromatic graph has order $11$ for $k=4$ and Jensen and Royle proved in \cite{JensenRoyle} through a computer search that it has order $22$ for $k=5$. 
The question is still open for $k=6$ where Goedgebeur proved in \cite{Goedgebeur} that it is between 32 and 40. 
For digraphs, forbidding cliques of size $c=2$ corresponds to considering oriented graphs, and actually yields again Neumann-Lara's question we study in this paper.

\section{Tools}
\label{sec:tools}

In this section, we introduce structural results that are used throughout the paper.
The first of these results answers the question on the smallest $3$-chromatic tournaments. 

\begin{theorem}[\cite{NL-tournaments}]
\label{thm:T7 3chrom}
Every tournament on $6$ vertices is $2$-colorable.
Moreover, there are exactly four $3$-chromatic tournaments on $7$ vertices;
namely, the tournament $\mathrm{Pal}_7,W,W_0,W_1$ depicted in Figure~\ref{fig:T7}.
\end{theorem}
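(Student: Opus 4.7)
The strategy for both halves rests on a structural dichotomy driven by the Erd\H{o}s--Moser-type bound guaranteeing a transitive subtournament $TT_k$ in every tournament on $\ge 2^{k-1}$ vertices. Either $T$ contains a large $TT_k$---in which case a candidate 2-coloring splits $T$ into this $TT_k$ and its complement---or $T$ is $TT_k$-free, which is very restrictive. In particular, in a $TT_4$-free tournament every out- and in-neighborhood is $TT_3$-free, hence has at most $3$ vertices; if it has exactly $3$ it must be a directed triangle $C_3$.

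For the 6-vertex claim, let $T$ be a tournament on $6$ vertices. If $T$ contains a $TT_4$, its complement has $2$ vertices and is trivially transitive, so the split is a 2-coloring. Otherwise $T$ is $TT_4$-free, so $d^+(v),d^-(v)\le 3$ for every $v$; together with $\sum_v d^+(v)=15$, this pins the score sequence down to $(3,3,3,2,2,2)$ and forces each out-neighborhood of a degree-$3$ vertex to be a $C_3$. The resulting class is very small; the plan is to enumerate its members up to isomorphism by tracking how the three $C_3$-neighborhoods intersect, and in each case exhibit a 2-coloring directly. The main obstacle is keeping the casework crisp, which I expect the symmetry among the three vertices of out-degree $3$ to handle.

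For the 7-vertex claim, I would argue in two stages. First, any 3-chromatic 7-tournament $T$ must be $TT_4$-free: if $T$ contained a $TT_4=S$ with complement $X$, then $X$ is a 3-tournament; when $X$ is transitive we have a 2-coloring immediately, while when $X\cong C_3$ an exchange argument---moving a vertex of $X$ into $S$ to extend it to $TT_5$, or swapping a vertex of $S$ with one of $X$---yields a 2-coloring unless the arcs between the two parts form a very narrow configuration, which can be eliminated using the already-established 2-colorability of $T-v$ for a suitable $v$. With $T$ now $TT_4$-free, the structural observation forces $T$ to be 3-regular with $N^+(v)\cong N^-(v)\cong C_3$ for every $v$. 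This locally-$C_3$ condition is extremely rigid; I would enumerate such tournaments up to isomorphism by fixing one $C_3$-neighborhood and tracking how the remaining four vertices attach, obtaining exactly the four candidates $\mathrm{Pal}_7, W, W_0, W_1$. A direct check that each is not 2-colorable---using the arc-transitivity of $\mathrm{Pal}_7$ to collapse the check to a single case and the automorphism groups of $W, W_0, W_1$ for the others---concludes. The main obstacle is carrying out the locally-$C_3$ enumeration in a human-readable way, which is where Neumann-Lara's original structural analysis does the heavy lifting.
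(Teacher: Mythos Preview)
The paper does not prove this theorem; it is quoted as a known result from Neumann-Lara's 1994 paper and used as a black box. So there is no ``paper's proof'' to compare against, and your proposal must stand on its own.

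Your 6-vertex outline is reasonable in spirit, but the 7-vertex argument has a genuine error at its very first stage. You assert that any $3$-chromatic tournament on $7$ vertices must be $TT_4$-free, and you plan to reduce to this case via an exchange argument that allegedly produces a $2$-coloring whenever a $TT_4$ is present. This is false: three of the four tournaments in the statement contain a $TT_4$. For instance, in $W$ (arcs as in Figure~\ref{fig:T7}) vertex~$1$ has out-neighborhood $\{2,4,5,6\}$, and $\{1,2,4,5\}$ induces a $TT_4$ with transitive order $1,2,4,5$; its complement $\{0,3,6\}$ is a directed triangle, so no $2$-coloring arises from this split. One checks similarly that $\{1,5,2,6\}$ is a $TT_4$ in $W_0$ and $\{1,2,6,4\}$ is a $TT_4$ in $W_1$. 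Only $\mathrm{Pal}_7$ is $TT_4$-free.

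Hence the ``narrow configuration'' you hope to eliminate is exactly where three of the four target tournaments live, and no appeal to the $2$-colorability of $T-v$ can dispose of them (since they are genuinely $3$-chromatic). Your second stage---the locally-$C_3$, $3$-regular enumeration---would at best recover $\mathrm{Pal}_7$ alone. A correct approach must analyze the case where $T$ contains a $TT_4$ whose complementary triangle is a $C_3$, classifying the possible orientations of the $12$ cross-arcs up to the automorphisms of $TT_4\times C_3$, and identifying which of these gluings fail to be $2$-colorable; this is where $W$, $W_0$, $W_1$ emerge.
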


\begin{figure}[!ht]
\centering
\begin{tikzpicture}[every node/.style={draw, circle,inner sep=1pt},thick,scale=1.25]
\node (0) at (0,1) {0};
\node (4) at (-.75,0.5) {4};
\node (5) at (-1.61,0) {5};
\node (6) at (-.75,-0.5) {6};
\node (1) at (.75,0.5) {1};
\node (2) at (1.61,0) {2};
\node (3) at (.75,-0.5) {3};
\draw[-{stealth}] (0) -> (1);
\draw[-{stealth},bend left] (0) to (2);
\draw[-{stealth}] (0) -> (3);
\draw[-{stealth}] (4) -> (0);
\draw[-{stealth},bend left] (5) to (0);
\draw[-{stealth}] (6) -> (0);
\draw[-{stealth}] (1) -> (2);
\draw[-{stealth}] (2) -> (3);
\draw[-{stealth}] (3) -> (1);
\draw[-{stealth}] (4) -> (5);
\draw[-{stealth}] (5) -> (6);
\draw[-{stealth}] (6) -> (4);
\draw[-{stealth}] (1) -> (5);
\draw[-{stealth}] (1) -> (6);
\draw[-{stealth}] (4) -> (1);
\draw[-{stealth}] (2) -> (5);
\draw[-{stealth}] (2) -> (4);
\draw[-{stealth}] (6) -> (2);
\draw[-{stealth}] (3) -> (4);
\draw[-{stealth}] (3) -> (6);
\draw[-{stealth}] (5) -> (3);
\node[color=white] at (-2.5,0) {\textcolor{black}{$\mathrm{Pal}_7$}};
\tikzset{xshift=5cm}
\node (0) at (0,1) {0};
\node (4) at (-.75,0.5) {4};
\node (5) at (-1.61,0) {5};
\node (6) at (-.75,-0.5) {6};
\node (1) at (.75,0.5) {1};
\node (2) at (1.61,0) {2};
\node (3) at (.75,-0.5) {3};
\draw[-{stealth}] (0) -> (1);
\draw[-{stealth},bend left] (0) to (2);
\draw[-{stealth}] (0) -> (3);
\draw[-{stealth}] (4) -> (0);
\draw[-{stealth},bend left] (5) to (0);
\draw[-{stealth}] (6) -> (0);
\draw[-{stealth}] (1) -> (2);
\draw[-{stealth}] (2) -> (3);
\draw[-{stealth}] (3) -> (1);
\draw[-{stealth}] (4) -> (5);
\draw[-{stealth}] (5) -> (6);
\draw[-{stealth}] (6) -> (4);
\draw[-{stealth}] (1) -> (5);
\draw[-{stealth}] (1) -> (6);
\draw[-{stealth}] (1) -> (4);
\draw[-{stealth}] (2) -> (5);
\draw[-{stealth}] (2) -> (4);
\draw[-{stealth}] (2) -> (6);
\draw[-{stealth}] (3) -> (4);
\draw[-{stealth}] (3) -> (6);
\draw[-{stealth}] (3) -> (5);
\node[color=white] at (2.5,0) {\textcolor{black}{$W$}};
\tikzset{yshift=-2cm,xshift=-5cm}
\node (0) at (0,1) {0};
\node (4) at (-.75,0.5) {4};
\node (5) at (-1.61,0) {5};
\node (6) at (-.75,-0.5) {6};
\node (1) at (.75,0.5) {1};
\node (2) at (1.61,0) {2};
\node (3) at (.75,-0.5) {3};
\draw[-{stealth}] (0) -> (1);
\draw[-{stealth},bend left] (0) to (2);
\draw[-{stealth}] (0) -> (3);
\draw[-{stealth}] (4) -> (0);
\draw[-{stealth},bend left] (5) to (0);
\draw[-{stealth}] (6) -> (0);
\draw[-{stealth}] (1) -> (2);
\draw[-{stealth}] (2) -> (3);
\draw[-{stealth}] (3) -> (1);
\draw[-{stealth}] (4) -> (5);
\draw[-{stealth}] (5) -> (6);
\draw[-{stealth}] (6) -> (4);
\draw[-{stealth}] (1) -> (5);
\draw[-{stealth}] (1) -> (6);
\draw[-{stealth}] (4) -> (1);
\draw[-{stealth}] (2) -> (6);
\draw[-{stealth}] (2) -> (4);
\draw[-{stealth}] (5) -> (2);
\draw[-{stealth}] (3) -> (4);
\draw[-{stealth}] (3) -> (5);
\draw[-{stealth}] (6) -> (3);
\node[color=white] at (-2.5,0) {\textcolor{black}{$W_0$}};
\tikzset{xshift=5cm}
\node (0) at (0,1) {0};
\node (4) at (-.75,0.5) {4};
\node (5) at (-1.61,0) {5};
\node (6) at (-.75,-0.5) {6};
\node (1) at (.75,0.5) {1};
\node (2) at (1.61,0) {2};
\node (3) at (.75,-0.5) {3};
\draw[-{stealth}] (0) -> (1);
\draw[-{stealth},bend left] (0) to (2);
\draw[-{stealth}] (0) -> (3);
\draw[-{stealth}] (4) -> (0);
\draw[-{stealth},bend left] (5) to (0);
\draw[-{stealth}] (6) -> (0);
\draw[-{stealth}] (1) -> (2);
\draw[-{stealth}] (2) -> (3);
\draw[-{stealth}] (3) -> (1);
\draw[-{stealth}] (4) -> (5);
\draw[-{stealth}] (5) -> (6);
\draw[-{stealth}] (6) -> (4);
\draw[-{stealth}] (1) -> (5);
\draw[-{stealth}] (1) -> (6);
\draw[-{stealth}] (1) -> (4);
\draw[-{stealth}] (2) -> (6);
\draw[-{stealth}] (2) -> (4);
\draw[-{stealth}] (5) -> (2);
\draw[-{stealth}] (3) -> (4);
\draw[-{stealth}] (3) -> (5);
\draw[-{stealth}] (6) -> (3);
\node[color=white] at (2.5,0) {\textcolor{black}{$W_1$}};
\end{tikzpicture}
\caption{The $3$-chromatic tournaments on $7$ vertices.}
\label{fig:T7}
\end{figure}
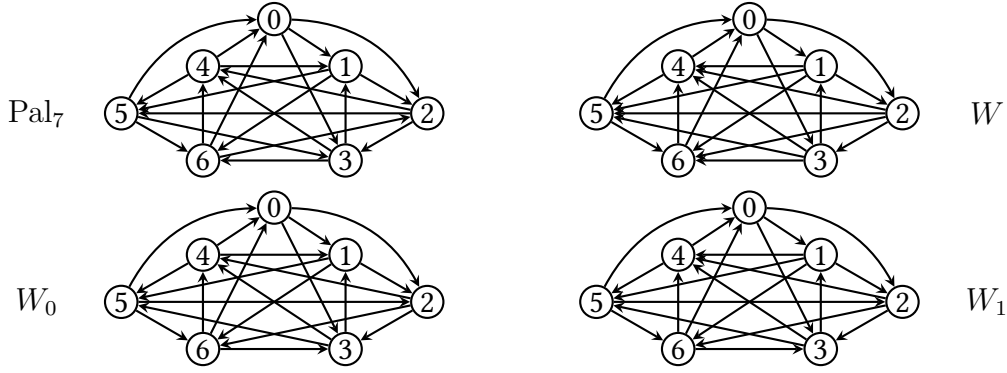

We then investigate tournaments which contain no transitive subtournaments of prescribed order.
We let $TT_k$ denote the transitive tournament on $k$ vertices, and we say that an oriented graph is \emph{$TT_k$-free} if it does not contain $TT_k$ as a subgraph. A simple inductive argument (using that the in-neighborhood or the out-neighborhood of each vertex contains at last half of the other vertices of a tournament) yields the following.
\begin{lemma}[\cite{stearns1959voting}]
\label{lem:sub}
Every tournament on $2^{k-1}$ vertices contains $TT_k$.
\end{lemma}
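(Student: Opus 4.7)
The plan is to prove Lemma~\ref{lem:sub} by induction on $k$, following exactly the hint supplied in the text. The base case $k=1$ is trivial: a single vertex is a copy of $TT_1$, so any tournament on $2^0 = 1$ vertex already contains it. For the inductive step, I would assume every tournament on $2^{k-2}$ vertices contains $TT_{k-1}$, and then show the same for tournaments on $2^{k-1}$ vertices.

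Let $T$ be a tournament on $n = 2^{k-1}$ vertices, and pick any vertex $v \in V(T)$. Since every other vertex is joined to $v$ by exactly one arc (either $v \to u$ or $u \to v$), we have $|N^+(v)| + |N^-(v)| = n - 1 = 2^{k-1} - 1$. By pigeonhole, at least one of the two neighborhoods has size $\geq \lceil (2^{k-1}-1)/2 \rceil = 2^{k-2}$.

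Suppose first that $|N^+(v)| \geq 2^{k-2}$. The subtournament $T[N^+(v)]$ is a tournament on at least $2^{k-2}$ vertices (any subset of size exactly $2^{k-2}$ also induces a tournament), so by the inductive hypothesis it contains a copy $H$ of $TT_{k-1}$. Since $v$ has an arc to every vertex of $N^+(v)$, placing $v$ as a new source above $H$ produces a transitive subtournament on $k$ vertices, i.e.\ a copy of $TT_k$. The symmetric case $|N^-(v)| \geq 2^{k-2}$ is identical: the inductive hypothesis gives a $TT_{k-1}$ inside $T[N^-(v)]$, and appending $v$ as a new sink below it yields $TT_k$.

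There is no real obstacle here; the only thing to be a touch careful about is the ceiling computation, namely that $(2^{k-1}-1)/2$ rounds up to exactly $2^{k-2}$ so that the inductive hypothesis applies to whichever neighborhood is larger. Once that is observed, the induction closes cleanly.
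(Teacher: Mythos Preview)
Your proof is correct and follows exactly the inductive argument the paper sketches before the lemma (using that one of the in- or out-neighborhoods of any vertex has size at least $2^{k-2}$, applying the induction hypothesis there, and appending $v$ as source or sink). There is nothing to add.
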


The bound of Lemma~\ref{lem:sub} is not tight, and determining the order of smallest $TT_k$-free tournaments is an open question. A precise answer is known only for $k\leqslant 6$~\cite{sanchez1998tournaments}, and we need the case $k=5$.

\begin{theorem}[\cite{REID1970225}]
\label{thm:14}
Every tournament on $14$ vertices contains $TT_5$.
\end{theorem}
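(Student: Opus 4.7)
The plan is to argue by contradiction: suppose $T$ is a $14$-vertex tournament with no $TT_5$.

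\emph{Step 1 (degree restriction).} For any vertex $v$, if $|N^+(v)|\ge 8$ then Lemma~\ref{lem:sub} applied with $k=4$ forces a $TT_4$ inside $T[N^+(v)]$, which together with $v$ produces a $TT_5$; hence $|N^+(v)|\le 7$, and by the dual argument $|N^-(v)|\le 7$. Since $|N^+(v)|+|N^-(v)|=13$, every vertex has in- and out-degree in $\{6,7\}$, and the identity $\sum_v |N^+(v)|=\binom{14}{2}=91$ pins down the score sequence: exactly seven vertices of out-degree $7$ (forming a set $A$) and seven of out-degree $6$ (forming $B=V\setminus A$, which is also the set of vertices of in-degree $7$).

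\emph{Step 2 (structure of neighborhoods).} For each $v\in A$, the induced subtournament $T[N^+(v)]$ has $7$ vertices and is $TT_4$-free. Applying the same degree trick one level down (using Lemma~\ref{lem:sub} with $k=3$, so that $2^{3-1}=4$) shows every vertex of $T[N^+(v)]$ has out-degree at most $3$, whence $T[N^+(v)]$ is $3$-regular. A further iteration forces each out-neighborhood \emph{inside} $T[N^+(v)]$ to be a directed $3$-cycle rather than a $TT_3$, as a $TT_3$ would combine with its source to give a $TT_4$; a short argument then identifies the unique tournament on $7$ vertices with this doubly-regular property as $\mathrm{Pal}_7$. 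Symmetrically, $T[N^-(u)]\cong \mathrm{Pal}_7$ for every $u\in B$.

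\emph{Step 3 (closing the argument).} The degree counts give exactly $28$ arcs from $A$ to $B$ and $21$ from $B$ to $A$. For each $v\in A$, the $3$-regularity of $T[N^+(v)]$ pins down, for every $u\in N^+(v)$, the exact number of out-neighbors of $u$ that land in $N^-(v)$: four if $u\in A$ and three if $u\in B$ (since $u$ has internal out-degree $3$ and $v\notin N^+(u)$). I would now double-count, for instance the triples $(v,u,w)$ with $v\in A$, $u\in N^+(v)$, and $w\in N^+(u)\cap N^-(v)$, splitting according to the class of $u$, and compare with the dual count obtained by starting from $B$ and using the $\mathrm{Pal}_7$-structure of in-neighborhoods. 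I expect this final gluing step to be the main obstacle: the local rigidity is so strong that many natural counts automatically balance, and pinning down a contradiction probably requires exploiting a finer invariant of $\mathrm{Pal}_7$—for instance its spectrum, its $2$-arc-transitive automorphism group, or a short case analysis on how two distinguished copies of $\mathrm{Pal}_7$ sharing some vertices can coexist—to rule out every consistent way of assembling seven $\mathrm{Pal}_7$-out-neighborhoods and seven $\mathrm{Pal}_7$-in-neighborhoods inside a single $14$-vertex tournament.
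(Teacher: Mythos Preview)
The paper does not prove Theorem~\ref{thm:14}; it is quoted from Reid and Parker~\cite{REID1970225} and used as a black box, so there is no in-paper argument to compare against.

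Your Steps~1 and~2 are correct and standard --- the forced score sequence and the identification of every $7$-element neighbourhood with $\mathrm{Pal}_7$ are precisely the opening moves of Reid and Parker's original proof. The gap is Step~3, and you acknowledge it yourself: you write ``I would now double-count'' and ``I expect this final gluing step to be the main obstacle,'' already anticipating that the natural counts may balance. That worry is well founded: the $\mathrm{Pal}_7$ structure is self-complementary and highly symmetric, and the global numerical identities one can extract from it (arc counts between $A$ and $B$, triple counts of the type you describe) are all consistent with the existence of such a tournament --- they do not by themselves produce a contradiction. Reid and Parker finish instead by a substantial case analysis on how the $\mathrm{Pal}_7$ neighbourhoods can be glued together, essentially reconstructing the tournament and eliminating every branch; no short spectral or double-counting finish is known to replace that step. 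As written, your proposal is a correct set-up followed by an unfinished Step~3, not a proof.
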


Interestingly, there is precisely one $TT_5$-free tournament on $12$ vertices,
and precisely one $TT_5$-free tournament on $13$ vertices
as shown by the following. 

\begin{theorem}[\cite{sanchez1998tournaments}]
\label{thm:unique T12}
There is a unique $TT_5$-free tournament on $12$ vertices, and its chromatic number is $3$.
\end{theorem}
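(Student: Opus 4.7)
The plan is to handle the two claims---uniqueness and chromatic number---separately; the chromatic number is the easier part. If $T$ is any $TT_5$-free tournament on 12 vertices and $\chi(T) \le 2$, then $V(T)$ partitions into two transitive subtournaments, one of which has size at least $6$ and thus contains $TT_6 \supseteq TT_5$, a contradiction. Hence $\chi(T) \ge 3$. The matching upper bound is obtained by exhibiting an explicit $3$-coloring (partition into three transitive sets) of the unique tournament produced by the uniqueness argument below.

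For uniqueness, let $T$ be any $TT_5$-free tournament on $12$ vertices. For every vertex $v$, the subtournaments $T[N^+(v)]$ and $T[N^-(v)]$ must themselves be $TT_4$-free (otherwise $v$ together with a $TT_4$ in $N^+(v)$ or $N^-(v)$ would yield a $TT_5$). By Lemma~\ref{lem:sub} with $k=4$, every $8$-vertex tournament contains $TT_4$, so $d^+(v), d^-(v) \le 7$; since $d^+(v) + d^-(v) = 11$ this forces $d^+(v) \in \{4,5,6,7\}$ for every $v$. Because $\sum_v d^+(v) = \binom{12}{2} = 66$, the out-degree multiset is strongly constrained. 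Moreover, any $TT_4$-free tournament on $7$ vertices must be $3$-chromatic (a $2$-partition into transitive sets would force a part of size at least $4$), and hence by Theorem~\ref{thm:T7 3chrom} is isomorphic to one of $\mathrm{Pal}_7, W, W_0, W_1$, once one verifies by direct inspection that each of these four is indeed $TT_4$-free.

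The core of the proof is then a structural case analysis. I would first treat the case that some vertex $v$ satisfies $d^+(v)=7$ (the case $d^-(v)=7$ being symmetric). Then $T[\{v\} \cup N^+(v)]$ has one of at most four isomorphism types from the previous paragraph, and one must attach the remaining four vertices $N^-(v)$ so that every $5$-subset still avoids $TT_5$ and both neighborhoods of every new vertex remain $TT_4$-free; these cross-constraints should collapse the possibilities to essentially one configuration. The remaining case, where every vertex has $d^+(v) \in \{4,5,6\}$, is handled by a similar but more delicate analysis, using in addition that every $6$-vertex subtournament is $2$-chromatic (Theorem~\ref{thm:T7 3chrom}) to restrict the arc placement between in- and out-neighborhoods. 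The main obstacle is the combinatorial explosion of local configurations: even with the sharp degree bounds, tracking the compatibilities across all overlapping $4$-subsets is delicate, and a successful hand argument (as in S\'anchez-Flores' original work) likely exploits symmetries of the emerging structure to collapse the case tree. Once the unique tournament is pinned down, the $3$-coloring needed for the upper bound is a short explicit verification.
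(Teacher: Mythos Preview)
The paper does not supply its own proof of this statement; Theorem~\ref{thm:unique T12} is simply quoted as a known result of S\'anchez-Flores~\cite{sanchez1998tournaments} and used as a black-box tool in Section~\ref{sec:12}. There is therefore no in-paper argument to compare your proposal against.

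On the proposal itself: your lower bound $\chi(T)\ge 3$ is correct and clean. There is, however, a factual slip in the uniqueness sketch. It is not true that all four of $\mathrm{Pal}_7, W, W_0, W_1$ are $TT_4$-free: in $W$ the set $\{1,2,4,5\}$ induces a $TT_4$ in that transitive order, and in $W_0$ the set $\{1,5,2,6\}$ does. In fact $\mathrm{Pal}_7$ is the \emph{unique} $TT_4$-free tournament on $7$ vertices, so your deduction ``$d^+(v)=7$ forces $T[N^+(v)]$ to be one of the four tournaments of Theorem~\ref{thm:T7 3chrom}'' is wrong as stated, though the corrected version (it must be $\mathrm{Pal}_7$) only helps you.

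The more serious issue is that what you have written is a plan rather than a proof. The substantive content---the case analysis that you say ``should collapse the possibilities to essentially one configuration'' and the ``similar but more delicate analysis'' for the near-regular case---is not carried out, and you explicitly flag the combinatorial explosion as the main obstacle without indicating how to overcome it. S\'anchez-Flores' determination of the extremal $TT_5$-free tournaments is a genuinely lengthy piece of casework, so this gap is real, not cosmetic. As it stands, the proposal establishes the easy inequality $\chi\ge 3$ and correctly identifies the relevant degree constraints, but leaves both uniqueness and the upper bound $\chi\le 3$ unproved.
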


\begin{theorem}[\cite{REID1970225}]
\label{thm:X13}
There is a unique $TT_5$-free tournament on $13$ vertices,
and it can be represented so that the vertices are integers $0, \dots, 12$
and $ij$ is an arc if and only if $j-i\in\{1,2,3,5,6,9\}$ modulo $13$.
\end{theorem}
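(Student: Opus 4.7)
The plan is to bootstrap from the $12$-vertex case via Theorem~\ref{thm:unique T12}. Denote by $X_{12}$ the unique $TT_5$-free tournament on $12$ vertices. For any $TT_5$-free tournament $T$ on $13$ vertices and any $v \in V(T)$, the subtournament $T-v$ is $TT_5$-free on $12$ vertices, hence isomorphic to $X_{12}$. So $T$ is obtained by adjoining a new vertex $v^*$ to (a copy of) $X_{12}$ and orienting the $12$ incident arcs.

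For existence, I would verify that the circulant $C$ on $\mathbb{Z}/13$ with connection set $S=\{1,2,3,5,6,9\}$ is $TT_5$-free. Since $C$ is vertex-transitive, it suffices to show that the out-neighborhood of $0$ (which is $S$ itself) induces a $TT_4$-free subtournament. This reduces to a direct $6$-vertex inspection: in the induced subtournament on $S$, each vertex with out-degree $3$ has its three out-neighbors forming a directed triangle, so no $TT_3$ (and therefore no $TT_4$) occurs. Consequently $C - \{0\}$ is a $TT_5$-free $12$-vertex tournament, so by Theorem~\ref{thm:unique T12} it is (isomorphic to) $X_{12}$.

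For uniqueness, I would analyze all possible one-vertex extensions of $X_{12}$. Writing $A = N_T^+(v^*)$ and $B = V(X_{12}) \setminus A$, the tournament $T$ is $TT_5$-free iff (i) both $A$ and $B$ induce $TT_4$-free subtournaments of $X_{12}$ (so that $v^*$ cannot be the top or bottom of a $TT_5$), and (ii) no $TT_5$ containing $v^*$ in an interior position is created. By Lemma~\ref{lem:sub}, condition (i) forces $|A|, |B| \leq 7$. The main obstacle is to enumerate, up to $\mathrm{Aut}(X_{12})$, the subsets $A \subseteq V(X_{12})$ satisfying (i) and (ii), and to show they all lie in a single orbit. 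Fortunately $X_{12}$ is highly symmetric: realizing it as $C - \{0\}$ makes visible the action of the multiplicative subgroup $\{1,3,9\}$ of $(\mathbb{Z}/13)^*$ stabilizing $0$, and this together with the size constraint keeps the case analysis short. Completing it then shows that every admissible extension reproduces the circulant $C$, giving uniqueness.
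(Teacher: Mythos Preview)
The paper does not prove this statement at all: Theorem~\ref{thm:X13} is quoted from Reid and Parker~\cite{REID1970225} as an external tool, with no argument supplied. So there is no proof in the paper to compare your proposal against.

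On its own merits, your existence argument is essentially complete. The one slip is the phrase ``so no $TT_3$ \dots occurs'': as written this is false (for instance $\{1,2,3\}$ induces a $TT_3$ inside $S$), but what you clearly mean---that no $TT_3$ occurs \emph{within the out-neighbourhood of any out-degree-$3$ vertex of $S$}---is correct and does yield $TT_4$-freeness of the subtournament on $S$, hence $TT_5$-freeness of $C$.

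Your uniqueness direction, however, is only a plan, not a proof. You correctly reduce to classifying, up to $\mathrm{Aut}(X_{12})$, the partitions $V(X_{12})=A\cup B$ satisfying your conditions (i) and (ii), and you assert that the resulting case analysis is short and collapses to a single orbit---but you do not carry it out, and that is where all the content lies. (With $|A|+|B|=12$ and both parts $TT_4$-free you get $|A|,|B|\in\{5,6,7\}$; even modulo the order-$3$ symmetry you exhibit, this is still a nontrivial enumeration, and condition~(ii) involves checking mixed transitive chains across $A$ and $B$, not just the two sides separately.) There is also a dependency worth flagging: you rely on Theorem~\ref{thm:unique T12}, which the paper cites from a 1998 source, to establish a 1970 result. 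This is not logically circular within the paper's framework---both statements are imported black boxes---but it does mean your argument is not a reconstruction of the original Reid--Parker proof.
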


We let $X_{13}$ be the unique $TT_5$-free tournament on $13$ vertices,
and we conclude this section with two propositions on the properties of $X_{13}$. It is well-known that this tournament is \emph{vertex-transitive}, which means that its automorphism group acts transitively on its vertices. The following stronger result actually holds.

\begin{proposition}[\cite{REID1970225}]
\label{clm:struct_X13}
The tournament $X_{13}$ is vertex-transitive and for every arc $ij$, there exists an automorphism of $X_{13}$ mapping $ij$ to either $01$ or $02$.
\end{proposition}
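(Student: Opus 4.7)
The plan is to exhibit an explicit group of automorphisms of $X_{13}$ acting richly on arcs, using only the Cayley-type description of Theorem~\ref{thm:X13}. Write $S=\{1,2,3,5,6,9\}\subseteq\mathbb{Z}/13\mathbb{Z}$, so that $ij$ is an arc of $X_{13}$ iff $j-i\in S$.

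First, I would observe that because the arc relation depends only on $j-i$, each translation $\tau_k\colon v\mapsto v+k \pmod{13}$ is an automorphism of $X_{13}$. These translations act transitively on the vertex set $\{0,\dots,12\}$, giving vertex-transitivity; moreover, every arc $ij$ is sent by $\tau_{-i}$ to an arc of the form $0k$ with $k=j-i\in S$. So the second claim reduces to showing that, for each $k\in S$, some automorphism of $X_{13}$ fixing $0$ sends $0k$ to either $01$ or $02$.

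To produce such automorphisms, I would look at the multiplication maps $\mu_a\colon v\mapsto av\pmod{13}$, which are bijections of $\{0,\dots,12\}$ fixing $0$. Such a map is an automorphism of $X_{13}$ exactly when $aS=S$ in $\mathbb{Z}/13\mathbb{Z}$. A direct computation gives $3\cdot S=\{3,6,9,15,18,27\}\equiv\{3,6,9,2,5,1\}=S\pmod{13}$, so $\mu_3$ is an automorphism, and hence so is $\mu_9=\mu_3^2$ (while $\mu_1=\mathrm{id}$). Next, I would compute the orbits of the group $\langle\mu_3\rangle$ acting on the arcs out of $0$: starting from $01$ we get the cycle $01\mapsto 03\mapsto 09\mapsto 01$, and starting from $02$ we get $02\mapsto 06\mapsto 05\mapsto 02$ (since $3\cdot 5=15\equiv 2$). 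These two orbits together exhaust all arcs $0k$ with $k\in S$.

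Combining the two ingredients finishes the proof: given any arc $ij$ of $X_{13}$, the composition $\mu_3^r\circ\tau_{-i}$ for an appropriate $r\in\{0,1,2\}$ sends $ij$ to $01$ if $j-i\in\{1,3,9\}$, and to $02$ if $j-i\in\{2,5,6\}$. I do not foresee any real obstacle; the only substantive step is the arithmetic check $3S=S$, which is a six-term computation, after which the orbit structure falls out immediately.
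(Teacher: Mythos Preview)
Your argument is correct: the Cayley description makes translations automatically automorphisms, and the single arithmetic check $3S=S$ gives the multiplicative automorphism $\mu_3$ whose orbits on $S$ are exactly $\{1,3,9\}$ and $\{2,5,6\}$, which is all that is needed. The paper does not supply its own proof of this proposition but simply cites Reid and Parker~\cite{REID1970225}; your approach is the standard one for such circulant (Cayley) tournaments and is essentially what any proof of this fact amounts to.
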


Using Proposition~\ref{clm:struct_X13}, we determine the structure of the copies of $TT_4$ in $X_{13}$ as follows. 

\begin{proposition}
\label{claim:main6.1}
Let $A$ be a set of vertices inducing $TT_4$ in $X_{13}$ whose vertices of highest out-degree are either $\{0,1\}$ or $\{0,2\}$. Then $A$ is either $\{0,1,2,3\}$, $\{0,1,3,6\}$, $\{0,1,6,2\}$ or $\{0,2,3,5\}$. Moreover, the four possible tournaments obtained by removing $A$ from $X_{13}$ are pairwise non-isomorphic, and each of them has no non-trivial automorphism. 
\end{proposition}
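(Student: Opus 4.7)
The plan is to split the proof into two parts: an explicit enumeration of the admissible sets $A$, followed by a finite verification of the structure of $X_{13}\setminus A$.

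For the enumeration, I write the vertices of a $TT_4$ in topological order $v_3\to v_2\to v_1\to v_0$, where $v_i$ has out-degree $i$ within $A$; the two vertices of highest out-degree in $A$ are then $v_3$ and $v_2$. By hypothesis, $\{v_3,v_2\}\in\{\{0,1\},\{0,2\}\}$. Since $1,2\in\{1,2,3,5,6,9\}$, the Cayley description of $X_{13}$ gives the arcs $0\to 1$ and $0\to 2$, so $v_3=0$ in both cases. The remaining pair $\{v_1,v_0\}$ must lie in $N^+(0)\cap N^+(v_2)$, which I read directly from the Cayley set: $N^+(0)\cap N^+(1)=\{2,3,6\}$, yielding the three unordered pairs $\{2,3\},\{2,6\},\{3,6\}$; and $N^+(0)\cap N^+(2)=\{3,5\}$, yielding the single pair $\{3,5\}$. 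This produces at most four candidate sets, and inspecting the single remaining arc inside each one confirms that each really induces $TT_4$ (with topological orders $0\to1\to2\to3$, $0\to1\to3\to6$, $0\to1\to6\to2$, and $0\to2\to3\to5$ respectively), giving exactly the four sets listed in the statement.

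For the ``moreover'' part, I would explicitly list the arc set of each of the four $9$-vertex tournaments $X_{13}\setminus A$ from the Cayley description, and then distinguish them by a suitable vertex-invariant---for instance the pair consisting of the out-degree in $X_{13}\setminus A$ and the number of directed triangles through the vertex, or any refinement thereof. Pairwise non-isomorphism follows as soon as the multisets of these invariants differ between tournaments, and rigidity of each tournament follows as soon as the invariant takes pairwise distinct values on its nine vertices (since any automorphism must preserve the invariant). The main obstacle is that, once $A$ is removed, the vertex-transitivity of $X_{13}$ granted by Proposition~\ref{clm:struct_X13} is lost and no obvious symmetry survives; hence the verification reduces to a small but unavoidable case analysis on the four tournaments, and the write-up should aim for an invariant simple enough to keep this analysis short and human-checkable.
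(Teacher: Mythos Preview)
Your enumeration of the four sets $A$ is correct and essentially identical to the paper's argument: both compute the common out-neighbourhood of the two top vertices of the $TT_4$ and then check the orientation of the one remaining arc.

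For the ``moreover'' part, your strategy is sound in principle but remains a plan rather than a proof: you have not verified that out-degree together with triangle count (or any named refinement) actually separates all nine vertices in each $T_i$, nor that the resulting multisets distinguish the four tournaments. The paper takes a more targeted route. For pairwise non-isomorphism it passes to the subtournament $T_i'$ induced by the vertices of in-degree~$4$ in $T_i$; one of these has three vertices while the other three have five, and the three $5$-vertex tournaments are visibly non-isomorphic. For rigidity it argues case by case from the in-degree sequence: in $T_1$ the in-degrees are $3,3,3,4,4,4,5,5,5$ with each degree class inducing a $TT_3$, which pins every vertex; in $T_2,T_3,T_4$ there are exactly two vertices of in-degree~$3$ and two of in-degree~$5$, all forced to be fixed, and it remains to check that each $5$-vertex $T_i'$ is rigid, which is done by a short inspection of degrees inside $T_i'$. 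Your generic-invariant approach would eventually succeed (the tournaments really are rigid and the problem is finite), and if a single invariant works it has the advantage of handling non-isomorphism and rigidity simultaneously; but be aware that degree alone does \emph{not} separate the vertices here, so you will end up doing roughly the same amount of bookkeeping as the paper --- its choice of the in-degree-$4$ subtournament is precisely what keeps that bookkeeping short.
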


\begin{proof}
Let $a,b,c$ and $d$ be four vertices in transitive order in $X_{13}$, that is, $ab, ac, ad$ and $bc, bd$ and $cd$ are arcs of $X_{13}$.
By hypothesis, we have $ab=01$ or $ab=02$. Now we use that $c$ and $d$ are out-neighbors of both $a$ and $b$.
If the arc $ab$ is $01$, then we can only complete $01$ into a $TT_4$ by choosing $cd$ as $23,36$ or $62$.
If $ab$ is $02$, then the only way to complete $02$ into a $TT_4$ is with $cd$ chosen as $35$. This concludes the first part of the statement.

We let $A_1 = \{0,1,2,3\}$, $A_2 = \{0,1,3,6\}$, $A_3 = \{0,1,6,2\}$ and $A_4 = \{0,2,3,5\}$,
and let $T_i = X_{13} \setminus A_i$ for every $i$ of $\{0,1,2,3\}$.

In order to show that the tournaments $T_1, T_2, T_3$ and $T_4$ are pairwise non-isomorphic,
we consider the subtournament $T_i'$ of $T_i$ induced by the set of all vertices whose in-degree in $T_i$ is $4$
(see tournaments $T_1', \dots, T_4'$ depicted in Figure~\ref{fig:nonisom}). 
We note that $T_1', \dots, T_4'$ are pairwise non-isomorphic, and thus $T_1, \dots, T_4$ are pairwise non-isomorphic.

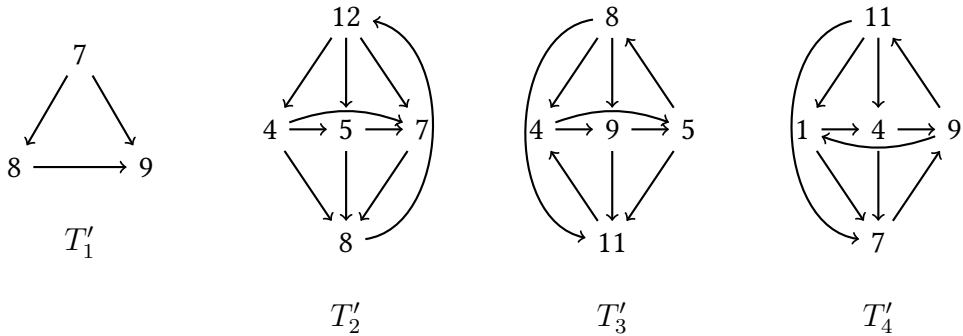
\begin{figure}[!ht]
\centering
\begin{tikzpicture}[thick,v/.style={minimum size = 10pt, circle}]
\node (7) at (90:1) {7};
\node (8) at (210:1) {8};
\node (9) at (330:1) {9};
\draw[->] (7) to (8);
\draw[->] (7) to (9);
\draw[->] (8) to (9);
\node at (0,-1.5) {$T_1'$};
\tikzset{xshift=3.5cm}
\node (12) at (0,1.5) {12};
\node (8) at (0,-1.5) {8};
\node (4) at (-1,0) {4};
\node (5) at (0,0) {5};
\node (7) at (1,0) {7};
\draw[->] (12) to (4);
\draw[->] (12) to (5);
\draw[->] (12) to (7);
\draw[->] (4) to (8);
\draw[->] (5) to (8);
\draw[->] (7) to (8);
\draw[->] (4) to (5);
\draw[->] (5) to (7);
\draw[->,bend left=20] (4) to (7);
\draw[->,bend right=80] (8) to (12);
\node at (0,-2.5) {$T_2'$};
\tikzset{xshift=3.5cm}
\node (8) at (0,1.5) {8};
\node (11) at (0,-1.5) {11};
\node (4) at (-1,0) {4};
\node (9) at (0,0) {9};
\node (5) at (1,0) {5};
\draw[->] (8) to (4);
\draw[->] (5) to (8);
\draw[->] (8) to (9);
\draw[->] (11) to (4);
\draw[->] (5) to (11);
\draw[->] (9) to (11);
\draw[->] (4) to (9);
\draw[->] (9) to (5);
\draw[->,bend left=20] (4) to (5);
\draw[->,bend right=80] (8) to (11);
\node at (0,-2.5) {$T_3'$};
\tikzset{xshift=3.5cm}
\node (8) at (0,1.5) {11};
\node (11) at (0,-1.5) {7};
\node (4) at (-1,0) {1};
\node (9) at (0,0) {4};
\node (5) at (1,0) {9};
\draw[->] (8) to (4);
\draw[->] (5) to (8);
\draw[->] (8) to (9);
\draw[->] (11) to (5);
\draw[->] (4) to (11);
\draw[->] (9) to (11);
\draw[->] (4) to (9);
\draw[->] (9) to (5);
\draw[->,bend left=20] (5) to (4);
\draw[->,bend right=80] (8) to (11);
\node at (0,-2.5) {$T_4'$};
\end{tikzpicture}
\caption{Tournaments $T_1', \dots, T_4'$.}
\label{fig:nonisom}
\end{figure}

Finally, we show that each of $T_1, \dots, T_4$ has no non-trivial automorphism.
For $T_1$,
we observe that vertices $4,5,6$ have in-degree $3$, vertices $7,8,9$ have in-degree $4$, and 
vertices $10,11,12$ have in-degree $5$ in $T_1$.
Furthermore, each of the sets $\{4,5,6\}$, $\{7,8,9\}$, and $\{10,11,12\}$ induces a $TT_3$.
Since automorphisms preserve in-degrees and $TT_3$ has no non-trivial automorphism,
we conclude that each vertex of $T_1$ has to be mapped to itself in every automorphism of $T_1$.
Thus, $T_1$ has no non-trivial automorphism.

For every $i$ of $\{2,3,4\}$,
we note that $T_i$ has precisely two vertices of in-degree $3$ and precisely two vertices of in-degree $5$ in $T_i$.
In particular, each of these vertices has to be mapped to itself in every automorphism of $T_i$.
We recall that the remaining vertices induce $T_i'$, and it remains to show that $T_i'$ has no non-trivial automorphism. 

For every $i$ of $\{2,3,4\}$,
we note that $T_i'$ contains precisely one vertex of in-degree $1$ and precisely one vertex of in-degree $3$ in $T_i'$,
and each of these vertices has to be mapped to itself in every automorphism of $T_i'$.
For each of $T_2'$ and $T_3'$,
we note that the set of all vertices of in-degree $2$ induces a $TT_3$.
It follows that each of $T_2'$ and $T_3'$ has no non-trivial automorphism.
For $T_4'$, we observe that vertex $9$ has to be mapped to itself in every automorphism of $T_4'$
(since there is an arc from $9$ to the unique vertex of in-degree $1$ in $T_4'$).
The desired conclusion for $T_4'$ follows.
\end{proof}

\section{The $4$-chromatic tournaments on $12$ vertices}
\label{sec:12}
Our disproof of Neumann-Lara's conjecture heavily relies on the following result, that is interesting by itself and has already been useful for other projects. 

\begin{theorem}
\label{thm:contains P11}
Every $4$-chromatic tournament on $12$ vertices contains $\mathrm{Pal}_{11}$.
\end{theorem}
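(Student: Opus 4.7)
The plan is to combine the uniqueness of $\mathrm{Pal}_{11}$ as the smallest $4$-chromatic tournament (recalled in Section~\ref{sec:intro}) with a structural case analysis. The key observation is that if $T$ is $4$-chromatic on $12$ vertices and some vertex $v\in V(T)$ satisfies $\chi(T-v)=4$, then $T-v$ is an $11$-vertex $4$-chromatic tournament, hence isomorphic to $\mathrm{Pal}_{11}$, and we are done. It therefore suffices to rule out the case where $T$ is vertex-$4$-critical, i.e., where $\chi(T-v)=3$ for every $v$.

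Assume then that $T$ is vertex-$4$-critical on $12$ vertices. Theorem~\ref{thm:unique T12} tells us that the unique $TT_5$-free tournament on $12$ vertices has chromatic number $3$, so $T$ must contain a copy of $TT_5$. Fix one such copy, with vertex set $A$, and set $B:=T\setminus A$, a $7$-vertex induced subtournament. Colouring $A$ with one colour class and $B$ with $\chi(B)$ additional classes yields a proper $(1+\chi(B))$-colouring of $T$, so $\chi(B)\geq 3$; combined with Theorem~\ref{thm:T7 3chrom}, this forces $B\in\{\mathrm{Pal}_7,W,W_0,W_1\}$.

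The remaining task is to enumerate, for each $B$ in this list, the orientations of the $35$ bipartite arcs between $A$ and $B$ that produce a vertex-$4$-critical tournament, and show there are none. I would use the branching-and-pruning scheme described in the introduction: process the bipartite arcs one vertex-neighbourhood at a time, abort a branch as soon as we can detect that every completion admits a proper $3$-colouring, and use the automorphism group $\mathrm{Aut}(B)$ to cut down the search (since $TT_5$ itself has trivial automorphism group, this is the only source of symmetry available). For each surviving completion, test by explicit colouring whether some vertex deletion remains $4$-chromatic; the claim is that this is always the case.

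The main obstacle is that nothing in Section~\ref{sec:tools} visibly forces $\mathrm{Pal}_{11}$ to appear, and the raw search space $2^{35}\approx 3\times 10^{10}$ is too large for hand verification; even after symmetry reductions, a conceptual argument seems out of reach with the tools at hand. Consistent with this, the authors verify the statement by a computer analysis and explicitly leave a human-readable proof as an open problem.
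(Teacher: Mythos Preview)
Your proposal is correct and takes essentially the same approach as the paper: decompose $T$ as a gluing of $TT_5$ with one of $\mathrm{Pal}_7,W,W_0,W_1$, then run a branch-and-prune computer search over the $2^{35}$ orientations of the bipartite arcs. Your reformulation via vertex-$4$-criticality is equivalent to the paper's direct check for a copy of $\mathrm{Pal}_{11}$, since for a $4$-chromatic $12$-vertex tournament $T$, ``some $T-v$ is $4$-chromatic'' and ``$T$ contains $\mathrm{Pal}_{11}$'' are the same statement given the uniqueness of $\mathrm{Pal}_{11}$; the paper itself notes (citing~\cite{4.67}) that Theorem~\ref{thm:contains P11} is equivalent to the non-existence of $4$-critical oriented graphs on $12$ vertices. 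The only substantive extra information in the paper is empirical: when $B\neq W_1$ the branch-and-prune search already returns no $4$-chromatic gluing at all (Corollary~\ref{cor:gluing}), so the final $\mathrm{Pal}_{11}$/criticality check is needed only for $B=W_1$.
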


This result has already been proven of interest; the first author used it for another project with other co-authors in~\cite{4.67}. The authors show that for every $k\geqslant 2$, there exist $k$-critical oriented graphs of any possible order larger than some threshold $p_k$. They then used Theorem~\ref{thm:contains P11} to prove that there is no $4$-critical oriented graphs on $12$ vertices (while $\mathrm{Pal}_{11}$ is one on $11$ vertices). In particular, this implies that $p_k$ is not necessarily the order $n_k$ of a smallest $k$-critical oriented graph (which is actually true for $k=2$ and $3$).

Our proof of Theorem~\ref{thm:contains P11} relies on a computer program, that basically went through an extensive case analysis, that would be too long to do by hand. In this section, we introduce the ideas behind the program. These ideas will then be reused and developed to prove Theorem~\ref{thm:18} in Section~\ref{sec:18}. All our programs can be found at~\url{https://github.com/tpierron/5chromatictournaments/}. 

Let $T$ be a $4$-chromatic tournament on $12$ vertices.
By Theorem~\ref{thm:unique T12}, $T$ contains a $TT_5$.
Since $T$ is $4$-chromatic, the remaining $7$ vertices induce a $3$-chromatic tournament.
Using Theorem~\ref{thm:T7 3chrom},
it follows that the vertices of $T$ can be partitioned into a copy of $TT_5$
and a copy of a tournament $X$ among $\{\mathrm{Pal}_7,W,W_0,W_1\}$.
In particular, we say that $T$ is a \emph{gluing} of $TT_5$ and $X$.

A naive way to prove Theorem~\ref{thm:contains P11} is then to try all the $4\times 2^{35}$ possible gluings, keep the $4$-chromatic ones and check whether they all contain $\mathrm{Pal}_{11}$. While this is almost doable, we explain here how to make this process faster, so that Theorem~\ref{thm:contains P11} can be checked in a matter of hours on a standard computer. 

Fix a tournament $X\in\{\mathrm{Pal}_7,W,W_0,W_1\}$. Instead of generating the $2^{35}$ gluings of $X$ with $TT_5$ directly, and then filter out the $3$-colorable ones, we generate them using a branching algorithm (see Algorithm~\ref{algo:dsmash}) in such a way that we will be able to cut branches.

\begin{algorithm}[H]
\KwIn{An oriented graph $T$.}
\KwOut{All arc-extensions of $T$ to $4$-chromatic tournaments on $V(T)$.}
\BlankLine
\If{$V(T)$ cannot be partitioned into 3 transitive tournaments}{
\eIf{$T$ is a tournament}
{\Return{$[T]$}}
{Choose two non-adjacent vertices $a,b$ in $T$.

\Return{\texttt{completions}$(T+ab)$ $+$ \texttt{completions}$(T+ba)$}
}
}
\caption{\texttt{completions}$(T)$}
\label{algo:dsmash}
\end{algorithm}

We start from disjoint union of $X$ and $TT_5$, and apply \texttt{completions}.
At each step, we choose a pair of non-adjacent vertices and add an arc joining them
which gives two branches of the computation (one branch for each possible direction of the new arc).
The main observation is that, if at some point $V(T)$ can be partitioned into three transitive tournaments,
then we can immediately cut the branch since all the tournaments we could obtain from this point onwards will be $3$-colorable. 

In order to prove Theorem~\ref{thm:contains P11}, we just run \texttt{completions} four times (once for each choice of $X$), and then check for a $\mathrm{Pal}_{11}$ in each of the resulting tournaments. One can check that when $X\neq W_1$, this step is not needed since the output is always empty, i.e. all the gluings are $3$-colorable. This yields the following by-product of our proof, which can actually be deduced from Theorem~\ref{thm:contains P11} (while not being necessary for proving it).

\begin{corollary}
\label{cor:gluing}
Every $4$-chromatic tournament on $12$ vertices is a gluing of $W_1$ and $TT_5$.
\end{corollary}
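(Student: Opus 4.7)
The plan is to combine Theorem~\ref{thm:contains P11} with the decomposition already developed at the beginning of Section~\ref{sec:12}. Let $T$ be a $4$-chromatic tournament on $12$ vertices. By Theorem~\ref{thm:unique T12}, $T$ must contain a copy of $TT_5$, since otherwise it would coincide with the unique $TT_5$-free $12$-vertex tournament, which is only $3$-chromatic. Removing that $TT_5$ leaves a $3$-chromatic tournament on $7$ vertices, which by Theorem~\ref{thm:T7 3chrom} is isomorphic to one of $\mathrm{Pal}_7,W,W_0,W_1$. The task is therefore to rule out the first three.

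Next I would invoke Theorem~\ref{thm:contains P11} to locate a copy of $\mathrm{Pal}_{11}$ inside $T$: since $|V(T)|=12$, there is a single vertex $v$ with $T-v\cong\mathrm{Pal}_{11}$. The crucial observation is that $\mathrm{Pal}_{11}$ is itself $TT_5$-free, for otherwise the remaining $6$ vertices would be $2$-colorable by Theorem~\ref{thm:T7 3chrom}, yielding a $3$-coloring of $\mathrm{Pal}_{11}$ and contradicting $\chi(\mathrm{Pal}_{11})=4$. Hence the copy of $TT_5$ in $T$ cannot survive in $T-v$, so $v$ must lie inside it. Consequently, $T-v$ decomposes as a gluing of $TT_4$ with the candidate $X\in\{\mathrm{Pal}_7,W,W_0,W_1\}$, and therefore $\mathrm{Pal}_{11}$ contains $X$ as an induced subtournament.

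The main (and only real) obstacle is the final verification that among $\{\mathrm{Pal}_7,W,W_0,W_1\}$, only $W_1$ occurs as an induced subtournament of $\mathrm{Pal}_{11}$. I would exploit the arc-transitivity of $\mathrm{Pal}_{11}$ (its automorphism group has order $55$) to reduce the problem to inspecting the $7$-vertex subtournaments complementary to a small number of $TT_4$'s up to automorphism; for each such representative, a comparison of in- and out-degree sequences against the four candidates depicted in Figure~\ref{fig:T7} (which are pairwise non-isomorphic) should identify the isomorphism type. Should this case analysis prove unwieldy by hand, the corollary can alternatively be read off directly from the computer run that established Theorem~\ref{thm:contains P11}: the procedure \texttt{completions} applied to the disjoint union of $TT_5$ and $X$ returns an empty list for $X\in\{\mathrm{Pal}_7,W,W_0\}$ and a non-empty list only for $X=W_1$, which is precisely the assertion of the corollary.
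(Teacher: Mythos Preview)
Your proposal is correct and follows essentially the same route as the paper: use Theorem~\ref{thm:contains P11} to write $T-v\cong\mathrm{Pal}_{11}$, observe that $\mathrm{Pal}_{11}$ is $TT_5$-free so the $TT_5$ contains $v$, and then identify the $7$-vertex complement of a $TT_4$ in $\mathrm{Pal}_{11}$ via arc-transitivity. The paper simply makes the last step explicit---after normalising the first arc of the $TT_4$ to $01$, the only completion to a $TT_4$ is $\{0,1,4,5\}$, whose complement in $\mathrm{Pal}_{11}$ is $W_1$---which is exactly the computation your plan sketches.
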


\begin{proof}
Let $T$ be a $4$-chromatic tournament on $12$ vertices. By Theorem~\ref{thm:contains P11}, there is a vertex $x$ such that $T-x$ is $\mathrm{Pal}_{11}$. Moreover, by Lemma~\ref{lem:sub}, $T$ also contains a set $S$ of 5 vertices inducing $TT_5$. Since $\mathrm{Pal}_{11}$ is $TT_5$-free, $S$ must contain $x$.

Now, the four remaining vertices of $S\setminus\{x\}$ induce $TT_4$ in $\mathrm{Pal}_{11}$. Observe that $\mathrm{Pal}_{11}$ is arc-transitive, hence up to renaming vertices, we may assume that $0$ and $1$ are the first and second vertices in the $TT_4$. One can then easily check that $S=\{x,0,1,4,5\}$. Now observe that the remaining vertices of $\mathrm{Pal}_{11}$ induce $W_1$. 
\end{proof}

We conclude this section by outlining the implementation of the $3$-colorability test.
The full implementation can be found in~\href{https://github.com/tpierron/5chromatictournaments/blob/main/section3.ml}{section3.ml}. 

If $T$ is $k$-colorable,
one can choose a $k$-coloring such that the size of the first color class is maximized.    
Therefore, with the list $L$ of sets of vertices inducing maximal transitive subtournaments of $T$
we can test $k$-colorability recurrently as follows: for each $V$ from $L$,
we check if $T - V$ is $(k-1)$-colorable.
The list of maximal transitive subtournaments of $T - V$
can be obtained by removing the vertices of $V$ from the elements of $L$. 
    
In particular, when running \texttt{completions}, we do not recompute the list $L$ from scratch at each call. Instead, we just update it when adding an arc.

\section{Disproving Neumann-Lara's conjecture}\label{sec:17}

\begin{theorem}
\label{thm:17-4}
Every $17$-vertex tournament is $4$-colorable.
\end{theorem}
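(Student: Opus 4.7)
Suppose for contradiction that $T$ is a $5$-chromatic tournament on $17$ vertices. The plan is to extract enough structural information about $T$ to exhibit an explicit $4$-coloring, contradicting the assumption.

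First I would show that $T$ admits a vertex partition $A_1 \cup A_2 \cup B$ with $A_1 \cong A_2 \cong TT_5$ and $B \cong W_1$. By Theorem~\ref{thm:14}, $T$ contains a $TT_5$, call it $A_1$. Then $T - A_1$ has $12$ vertices and must be $4$-chromatic (otherwise $T$ is $4$-colorable, by making $A_1$ an additional color class), so by Corollary~\ref{cor:gluing} it is a gluing of $W_1$ and $TT_5$; take $A_2$ to be the $TT_5$-part and $B$ the $W_1$-part. Note also that $T$ cannot contain three pairwise disjoint copies of $TT_5$: these would cover $15$ vertices and the remaining $2$ vertices form a $1$-chromatic tournament, making $T$ $4$-colorable.

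To conclude, I would exhibit a $4$-coloring $C_1, C_2, C_3, C_4$ of $T$ with $A_1 \subset C_1 \cup C_2$ and $A_2 \subset C_3 \cup C_4$. Such a coloring imposes no constraint on the arcs between $A_1$ and $A_2$ (they only join distinct color groups), so the problem reduces to finding a partition $B = X \sqcup Y$ for which $A_1 \cup X$ and $A_2 \cup Y$ are both $2$-chromatic. The main obstacle, and the technical heart of the proof, is showing that such a partition always exists.

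To overcome this obstacle I would exploit the rigidity of $W_1$: it has only four $3$-colorings, all of class pattern $3{+}2{+}2$, with the distinguished vertex (of in- and out-degree $3$) always in the size-$3$ class, and it has a very small automorphism group. Combined with this is the constraint that each $A_i \cup B$ is itself a $4$-chromatic $12$-vertex tournament: by Theorem~\ref{thm:contains P11} it contains a copy of $\mathrm{Pal}_{11}$ in which $A_i \cap \mathrm{Pal}_{11}$ is a $TT_4$ that is determined up to automorphism by arc-transitivity of $\mathrm{Pal}_{11}$, so the only remaining freedom in attaching $A_i$ to $B$ is the placement of the single extra vertex $A_i \setminus \mathrm{Pal}_{11}$ relative to this $TT_4$. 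A finite hand case analysis on the two embeddings (one from the $A_1$-side, one from the $A_2$-side), cross-referenced with the four $3$-colorings of $B$, then produces a suitable $X$, which gives the desired $4$-coloring of $T$ and the contradiction.
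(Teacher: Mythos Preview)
Your outline is correct and follows the same route as the paper: assume a $5$-chromatic $T_{17}$, extract the partition $A_1\cup A_2\cup B$ with $B\cong W_1$ via Corollary~\ref{cor:gluing}, reduce to finding a split $B=X\sqcup Y$ with $A_i\cup X$ and $A_j\cup Y$ both $2$-colorable, and then exploit that each $A_i\cup B$ contains $\mathrm{Pal}_{11}$ so that the only freedom is the position of the single extra vertex of $A_i$ inside its $TT_5$. Two small remarks: the observation that $T$ cannot contain three disjoint $TT_5$'s is true but unused here, and the ``four $3$-colorings of $W_1$'' are not the operative tool---the paper's case analysis (Claims~\ref{claim:1}--\ref{claim:4}) works directly with specific $3$- and $4$-element subsets of $B$ and requires a somewhat delicate interplay between the two sides rather than a straight $5\times 5$ enumeration.
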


This section is devoted to the proof of Theorem~\ref{thm:17-4}. By contradiction, we consider a $17$-vertex tournament $T_{17}$ which is not $4$-colorable. We show that, due to this assumption, $T_{17}$ has a very rigid structure, which allows us reach a contradiction by constructing a $4$-coloring of $T_{17}$. The first structural result is summarized in the following lemma.

\begin{lemma}
\label{lem:2TT5in17}
 The vertices of $T_{17}$ can be partitioned in three sets $A_1,A_2,B$ such that $A_1$ and $A_2$ both induce  $TT_5$ and $B$ induces $W_1$.
\end{lemma}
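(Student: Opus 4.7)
The plan is to extract a $TT_5$ from $T_{17}$ and then decompose the remaining $12$-vertex tournament using the structural results from Sections~\ref{sec:tools} and~\ref{sec:12}.

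Since $T_{17}$ has more than $14$ vertices, Theorem~\ref{thm:14} guarantees a copy of $TT_5$ in $T_{17}$; let $A_1$ denote its vertex set, and let $T'=T_{17}\setminus A_1$, a tournament on $12$ vertices. The whole lemma will reduce to showing that $T'$ is a $4$-chromatic $12$-vertex tournament, so that the structural results of Section~\ref{sec:12} apply directly.

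The next step is therefore to pin down $\chi(T')=4$. The lower bound $\chi(T')\ge 4$ is immediate: any $3$-coloring of $T'$ would extend to a $4$-coloring of $T_{17}$ by placing the transitive set $A_1$ in a fresh color class, contradicting the standing assumption that $T_{17}$ is not $4$-colorable. For the matching upper bound, I would observe that $T'$ is either the unique $TT_5$-free $12$-vertex tournament—which is $3$-chromatic by Theorem~\ref{thm:unique T12}—or contains some copy $C$ of $TT_5$; in the second case, $T'\setminus C$ has $7$ vertices and is $3$-colorable (Theorem~\ref{thm:T7 3chrom} says that every $6$-vertex tournament is $2$-colorable, hence every $7$-vertex tournament is $3$-colorable), and reinstating $C$ as a fourth color class produces a $4$-coloring of $T'$.

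Having established that $T'$ is a $4$-chromatic tournament on $12$ vertices, I would invoke Corollary~\ref{cor:gluing} to write $T'$ as a gluing of $W_1$ and $TT_5$; this gives a partition $V(T')=A_2\sqcup B$ with $A_2$ inducing $TT_5$ and $B$ inducing $W_1$. Together with $A_1$, this is exactly the partition asserted by the lemma. The argument is a short chain of invocations of previously established results; the only step that requires a moment's care is the upper bound $\chi(T')\le 4$, which is what allows Corollary~\ref{cor:gluing} to be applied without any non-trivial input on the chromatic number of small tournaments.
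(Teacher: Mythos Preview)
Your proof is correct and follows essentially the same route as the paper: extract a $TT_5$, argue that the remaining $12$-vertex tournament is $4$-chromatic, and apply Corollary~\ref{cor:gluing}. The only differences are cosmetic---you invoke Theorem~\ref{thm:14} rather than Lemma~\ref{lem:sub} for the first $TT_5$, and you spell out the upper bound $\chi(T')\le 4$ explicitly, whereas the paper leaves it implicit (it is in any case forced by the discussion preceding Corollary~\ref{cor:gluing}).
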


\begin{proof}
By Lemma~\ref{lem:sub}, every tournament on $17$ vertices contains a set $A_1$ of five vertices inducing a transitive tournament. Removing $A_1$ from $T_{17}$ gives a tournament $T_{12}$ on $12$ vertices, which is not $3$-colorable (otherwise $T_{17}$ would be $4$-colorable). The result now follows by applying Corollary~\ref{cor:gluing} to $T_{12}$.
\end{proof}

The contradiction then follows directly from the next lemma.
\begin{lemma}
\label{lem:2col}
One can split $B$ as $B_1\cup B_2$ such that each of the subtournaments of $T_{17}$ induced by $A_1\cup B_1$ and $A_2\cup B_2$ is $2$-colorable.
\end{lemma}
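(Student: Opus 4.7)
The plan is to exploit the rigid structure imposed on each of $A_1\cup B$ and $A_2\cup B$ by Theorem~\ref{thm:contains P11} and Corollary~\ref{cor:gluing} of the previous section.

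First I would establish that each $A_i\cup B$ is $4$-chromatic. The upper bound $\chi(A_i\cup B)\le \chi(A_i)+\chi(B)=1+3=4$ is immediate, and if $A_i\cup B$ were $3$-colourable, extending such a colouring with a fourth colour on $A_{3-i}$ (acyclic as a $TT_5$) would yield a $4$-colouring of $T_{17}$, contradicting our standing hypothesis. Theorem~\ref{thm:contains P11} then provides a $\mathrm{Pal}_{11}$-subtournament inside each $A_i\cup B$; since $\mathrm{Pal}_{11}$ is $TT_5$-free, the deleted vertex $v_i$ lies in $A_i$, $A_i\setminus\{v_i\}$ is a $TT_4$ inside this $\mathrm{Pal}_{11}$ and $B$ is its complementary $W_1$ (as in the proof of Corollary~\ref{cor:gluing}). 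Using arc-transitivity of $\mathrm{Pal}_{11}$, I would normalize so that $A_i\setminus\{v_i\}$ is the canonical $TT_4=\{0,1,4,5\}$ of the standard labelling. This pins down every arc between $A_i\setminus\{v_i\}$ and $B$ and makes the two $4$-tuples $A_1\setminus\{v_1\}$ and $A_2\setminus\{v_2\}$ interact with $B$ via the same $\mathrm{Pal}_{11}$-pattern.

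Next I would produce the split $B=B_1\sqcup B_2$ by a structural case analysis. The $W_1$-structure of $B$ only has a handful of maximal transitive subsets (for instance, exactly four $TT_4$s in $W_1$, each with a directed $3$-cycle as complement in $B$), and the $\mathrm{Pal}_{11}$-pattern completely determines how each of them connects to the fixed $TT_4=\{0,1,4,5\}$ of each $A_i\setminus\{v_i\}$. For each of the finitely many pairs of positions of $(v_1,v_2)$ in the transitive orders of $A_1,A_2$ and each admissible orientation of the arcs from $v_i$ to $B$, I would exhibit an explicit split and explicit $2$-colourings of $A_1\cup B_1$ and $A_2\cup B_2$; combining these colourings with disjoint pairs of colours $\{1,2\}$ and $\{3,4\}$ gives a $4$-colouring of $T_{17}$, contradicting the hypothesis and completing the proof.

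The main difficulty is carrying out this case analysis uniformly. A naive enumeration is unwieldy because each $v_i$ can occupy any of five positions in $A_i$ and its seven arcs to $B$ are a priori free. The crucial simplification is that the arcs between $A_i\setminus\{v_i\}$ and $B$ are entirely pinned down, so any $3$-cycle that could obstruct a $2$-colouring of $A_i\cup B_i$ must either lie within $B$ (a finite check against the fixed $W_1$-structure) or involve $v_1$ or $v_2$ (a local check depending only on a few arcs). This reduces the verification to a short list of candidate splits; the freedom to swap the roles of $A_1$ and $A_2$ and to choose among the four $TT_4$s of $W_1$ then ensures that at least one candidate always produces the required two $2$-colourings.
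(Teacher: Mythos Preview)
Your high-level strategy matches the paper's exactly: use Theorem~\ref{thm:contains P11} to pin down all arcs between $A_i\setminus\{v_i\}$ and $B$ via the unique $\mathrm{Pal}_{11}$-pattern (this is precisely Figure~\ref{fig:situation174}), and then perform a case analysis over the position of the extra vertex $v_i$ in the transitive order of $A_i$ and its arcs to $B$.

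The gap is that you have written a plan, not a proof. The entire content of the lemma lies in \emph{executing} the case analysis, and this is where the real work is. The paper does not enumerate pairs of positions of $(v_1,v_2)$ together with all arc-orientations to $B$; that would indeed be unwieldy. Instead it isolates four concrete claims about a \emph{single} $A_i$ (Claims~\ref{claim:1}--\ref{claim:4}), each stating that certain specific subsets $B'\subset B$ satisfy $\chi(A_i\cup B')=2$, or conditional statements of that shape. Each claim is proved by a five-way split on the rank of the extra vertex, exhibiting explicit $2$-colourings. The lemma then follows from a short logical combination: Claim~\ref{claim:1} handles the case $\chi(A_i\cup\{2,3,5,6\})=2$ for some $i$; otherwise Claim~\ref{claim:2} (an either/or statement) applied to both $A_1$ and $A_2$ produces two subcases, each resolved by Claim~\ref{claim:3} or Claim~\ref{claim:4}. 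None of this structure is visible in your outline.

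A secondary point: your heuristic that the relevant candidate splits are ``the four $TT_4$s of $W_1$ with a directed triangle complement'' is misleading. The splits the paper actually uses include $B_1=\{0,1,4\}$, $\{0,2,4\}$, $\{0,1,6\}$ (size three, not transitive inside $W_1$) and $B_2=\{1,3,5,6\}$, $\{2,3,4,5\}$, $\{2,3,5,6\}$; these are not of the $TT_4$/triangle form. Restricting attention to $TT_4$-splits will not suffice, so the ``short list of candidate splits'' you allude to still has to be discovered, and that discovery is the substance of the proof.
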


The rest of the proof is devoted to prove Lemma~\ref{lem:2col}.
To prove this lemma, we consider the tournament $T_{12}$ induced by $A_1\cup B$, and we identify some subsets $B_1$ of $B$ such that $A_1\cup B_1$ induces a $2$-colorable tournament. Up to renaming, we can assume that $B=[0,6]$ (with the labeling depicted in Figure~\ref{fig:T7}). Let us first state  four claims whose proofs are postponed to the end of this section. For readability, we write $\chi(X)$ to denote the chromatic number of the oriented graph induced in $T_{12}$ by a set $X$ of vertices.

\begin{claim}
\label{claim:1}
$\chi(A_1\cup\{0,1,4\})=2$.
\end{claim}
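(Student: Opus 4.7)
Since $\{0,1,4\} \subset B$ induces a directed triangle in $W_1$, the inequality $\chi(A_1 \cup \{0,1,4\}) \geq 2$ is immediate, and the plan is to produce an explicit $2$-coloring of this $8$-vertex subtournament.

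The starting point is Theorem~\ref{thm:contains P11}: the $4$-chromatic tournament $T_{12} = A_1 \cup B$ contains a copy of $\mathrm{Pal}_{11}$. Let $x$ be the unique vertex of $T_{12}$ outside this copy. The proof of Corollary~\ref{cor:gluing} forces $A_1$ to contain $x$ (as $\mathrm{Pal}_{11}$ is $TT_5$-free) and the remaining four vertices of $A_1$ to form a $TT_4$ inside $\mathrm{Pal}_{11}$. By arc-transitivity of $\mathrm{Pal}_{11}$ we may assume $A_1 \setminus \{x\} = \{0,1,4,5\}$ in $\mathrm{Pal}_{11}$'s labeling, so that $A_1 \setminus \{x\}$ has transitive order $5 \to 4 \to 1 \to 0$. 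Since $W_1$ has trivial automorphism group, the identification of $B$ with the labeled $W_1$ of Figure~\ref{fig:T7} is then forced, and a direct computation identifies $\{0,1,4\} \subset B$ with $\{7,8,9\} \subset V(\mathrm{Pal}_{11})$, forming the directed triangle $8 \to 7 \to 9 \to 8$.

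It thus suffices to $2$-color the subtournament of $T_{12}$ induced by $\{x, 0, 1, 4, 5, 7, 8, 9\}$. I would exhibit three $2$-colorings of the $7$-vertex subset $\{0,1,4,5,7,8,9\}$ of $\mathrm{Pal}_{11}$ into transitive classes:
\[
\mathcal{P}_1 = \{5,4,7\} \cup \{0,1,8,9\},\quad
\mathcal{P}_2 = \{0,1,7\} \cup \{4,5,8,9\},\quad
\mathcal{P}_3 = \{1,7,8\} \cup \{0,4,5,9\},
\]
whose transitive orders ($5,7,4$ and $1,9,0,8$ for $\mathcal{P}_1$; $1,0,7$ and $9,8,5,4$ for $\mathcal{P}_2$; $1,8,7$ and $9,5,4,0$ for $\mathcal{P}_3$) can be read off directly from the Paley structure.

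It remains to insert $x$ into one of the two classes in one of these partitions. Since $A_1$ is a $TT_5$, $x$ sits in one of five positions in the transitive order of $A_1$, which determines the arcs from $x$ to $\{0,1,4,5\}$; the three arcs from $x$ to $\{7,8,9\}$ remain unconstrained. A short case analysis then shows the following: when $x$ is the top vertex of $A_1$ (so $x \to 0, 1, 4, 5$), partition $\mathcal{P}_3$ works with $x$ placed into the class $\{0,4,5,9\}$, which remains transitive for either direction of the arc between $x$ and $9$; when $x$ is in the second position ($5 \to x$, $x \to 1, 4, 0$), partition $\mathcal{P}_1$ works with $x$ inserted into $\{5,4,7\}$ for either direction of the arc between $x$ and $7$; and in the remaining three positions (where $4,5 \to x$), one of $\mathcal{P}_1, \mathcal{P}_2$ works depending on the direction of the arc between $x$ and $7$ (using $\mathcal{P}_1$ when $7 \to x$ and $\mathcal{P}_2$ when $x \to 7$, the latter always succeeding when $x$ is at the very bottom of $A_1$). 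The main obstacle is to identify this small collection of partitions whose union covers every combination of $x$'s position in $A_1$ and the arc between $x$ and $7$; once they are in hand, each verification reduces to checking a transitive order on at most $5$ elements, which is mechanical.
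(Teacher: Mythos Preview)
Your proof is correct and follows essentially the same approach as the paper: both exploit Theorem~\ref{thm:contains P11} to embed $A_1\cup B$ in $\mathrm{Pal}_{11}$ plus one extra vertex $x\in A_1$, and both conclude via a short case analysis on the position of $x$ in the transitive order of $A_1$, exhibiting an explicit $2$-coloring in each case. The only differences are cosmetic: you work in Paley coordinates (with $A_1\setminus\{x\}=\{0,1,4,5\}$ and the $W_1$-vertices $0,1,4$ becoming $8,7,9$), whereas the paper keeps the $W_1$ labels $0,\dots,6$ on $B$ and calls the $TT_4$ in $\mathrm{Pal}_{11}$ by $a,b,c,d$; and you package the five positions of $x$ into three cases by first fixing three base partitions $\mathcal P_1,\mathcal P_2,\mathcal P_3$ of the seven known vertices and then inserting $x$, while the paper writes out five separate colorings. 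Translating between the two labelings, your colorings coincide with the paper's.
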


\begin{claim}
\label{claim:2}
$\chi(A_1\cup\{0,1,2,3\})=2$ or $\chi(A_1\cup\{0,4,5,6\})=2$.
\end{claim}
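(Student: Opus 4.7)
The plan is to leverage Claim~\ref{claim:1}: fix a $2$-coloring $(C_1, C_2)$ of $A_1 \cup \{0, 1, 4\}$, and adapt it into a $2$-coloring of either $A_1 \cup \{0, 1, 2, 3\}$ or $A_1 \cup \{0, 4, 5, 6\}$. Write $L = C_1 \cap A_1$ and $R = C_2 \cap A_1$; both are transitive subsets of the $TT_5$ given by $A_1$. Since $\{0, 1, 4\}$ induces a directed $3$-cycle in $W_1$ (arcs $0 \to 1$, $1 \to 4$, $4 \to 0$), these three vertices split $2$--$1$ between $C_1$ and $C_2$. Up to swapping color classes, three cases arise: (a) $\{0, 1\} \subseteq C_1$ and $4 \in C_2$; (b) $\{0, 4\} \subseteq C_1$ and $1 \in C_2$; (c) $\{1, 4\} \subseteq C_1$ and $0 \in C_2$.

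A key structural fact is that each of the $4$-vertex subtournaments $\{0, 1, 2, 3\}$ and $\{0, 4, 5, 6\}$ admits exactly three $2$-colorings, obtained by isolating one vertex of their respective $3$-cycles $\{1, 2, 3\}$ and $\{4, 5, 6\}$. In case (a), one drops $4$ from the picture and tries to extend $(C_1, C_2)$ by distributing $\{2, 3\}$ between the two classes so as to realize either the $2$-coloring $\{0, 1, 2\}|\{3\}$ or $\{0, 1, 3\}|\{2\}$ of $\{0, 1, 2, 3\}$. A short verification, using the arcs $0 \to \{2, 3\}$, $1 \to 2$, $3 \to 1$ together with the transitivity of $L \cup \{0, 1\}$ and $R \cup \{4\}$, is expected to show that one of these two placements preserves transitivity, yielding $\chi(A_1 \cup \{0, 1, 2, 3\}) = 2$. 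Case (b) is fully symmetric: drop $1$, keep $\{0, 4\}$ in $C_1$, and place $\{5, 6\}$ analogously to reach $\chi(A_1 \cup \{0, 4, 5, 6\}) = 2$.

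Case (c) is the main obstacle, since neither $\{0, 1\}$ nor $\{0, 4\}$ is preserved by the coloring, and we must instead aim for the remaining isolating $2$-coloring of one of the target subtournaments, namely $\{0, 2, 3\}|\{1\}$ or $\{0, 5, 6\}|\{4\}$. In both targets, the change of ``which pair is together'' forces a full reconstruction of the partition of $A_1$. The key leverage is that $L \cup \{1, 4\}$ is transitive with $1 \to 4$, so $1$ and $4$ occupy consecutive positions in the transitive order, which rigidly constrains how each vertex of $L$ relates to both $1$ and $4$. Combined with the arcs $3 \to 1$, $4 \to 5$, $6 \to 4$, and $\{5, 6\} \to 0$ of $W_1$, a short case analysis on the rank of $\{1, 4\}$ within $L$ should show that at least one of the two target $2$-colorings is always realizable, completing the proof.
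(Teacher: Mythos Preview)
Your plan has a genuine gap: the case split (a)/(b)/(c) based on how $\{0,1,4\}$ sits in a fixed $2$-coloring of $A_1\cup\{0,1,4\}$ does \emph{not} determine which disjunct of Claim~\ref{claim:2} holds, and the ``short verifications'' you defer actually fail.

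Take the case $x<a$ in the $TT_5$ order. The coloring produced in the proof of Claim~\ref{claim:1} is $(\{(x,4),a,b,d\},\{c,0,1\})$, which falls under your case~(a). Your recipe says to drop $4$ and distribute $\{2,3\}$ between the two classes. But vertex $3$ cannot go into either class: $\{0,3,c\}$ is a directed triangle ($0\to 3$, $3\to c$, $c\to 0$), so $3$ cannot join $\{c,0,1\}$; and $\{a,b,3\}$ is a directed triangle ($a\to b$, $b\to 3$, $3\to a$), so $3$ cannot join $\{x,a,b,d\}$. Hence no placement of $\{2,3\}$ extends this partition to a $2$-coloring of $A_1\cup\{0,1,2,3\}$. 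The same phenomenon occurs in case~(b) when $a<x<b$: the coloring $(\{c,4,d,0\},\{a,(1,x),b\})$ cannot absorb vertex $5$, since both $\{c,4,5\}$ and $\{a,b,5\}$ are directed triangles. So your proposed extensions in cases~(a) and~(b) are not ``short verifications'' but outright failures.

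There is also a smaller inaccuracy: the subtournament on $\{0,1,2,3\}$ has six unordered $2$-colorings, not three (all three $2$--$2$ splits are valid, in addition to the three $1$--$3$ splits you list). And in case~(c), your claim that $1$ and $4$ are consecutive in the transitive order of $L\cup\{1,4\}$ overlooks that $x$ may sit between them.

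The paper's proof proceeds quite differently: it writes down four explicit $2$-colorings $\gamma_1,\gamma_2$ of $\{a,b,c,d,0,1,2,3\}$ and $\gamma_3,\gamma_4$ of $\{a,b,c,d,0,4,5,6\}$, and then does a five-way case split on the rank of $x$ among $a,b,c,d$, showing in each case that $x$ can be inserted into one of the $\gamma_i$. In particular, when $x<a$ (your failing case~(a) example) the paper extends $\gamma_4$ and concludes $\chi(A_1\cup\{0,4,5,6\})=2$, the \emph{other} disjunct from the one your case~(a) targets. The moral is that the information carried by a $2$-coloring of $A_1\cup\{0,1,4\}$ is too coarse to decide which half of Claim~\ref{claim:2} to aim for; one really needs the position of $x$.
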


\begin{claim}
\label{claim:3}
If $\chi(A_1\cup\{4,5,6\})>2$ and $\chi(A_1\cup\{2,3,5,6\})>2$, then $\chi(A_1\cup\{0,2,4\})=\chi(A_1\cup\{1,3,5,6\})=2$.
\end{claim}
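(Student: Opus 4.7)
The plan is to translate each non-$2$-colorability hypothesis into combinatorial constraints on how the relevant vertices of $B$ attach to $A_1$, and then to use these constraints to exhibit explicit $2$-colorings of the two target subtournaments. To set this up, label $A_1$ in its transitive order as $a_1 \to a_2 \to \dots \to a_5$, and for each $v \in B$ record its attachment to $A_1$ by $L(v) = \{i : a_i \to v\} \subseteq \{1, \dots, 5\}$. A $2$-coloring of $A_1 \cup S$ (for $S \subseteq B$) corresponds to partitions $A_1 = I_1 \sqcup I_2$ and $S = S_1 \sqcup S_2$ such that each $I_k \cup S_k$ induces a transitive tournament. Since $I_k$ is already transitive in its natural order, this reduces to a combinatorial interleaving condition depending only on $\{L(v) : v \in S_k\}$ and on the internal arcs of $S_k$.

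I would then analyse each hypothesis separately. Since $\{4, 5, 6\}$ is the directed triangle $4 \to 5 \to 6 \to 4$, in any $2$-coloring one class must contain an arc of this triangle and the other a singleton. The hypothesis $\chi(A_1 \cup \{4, 5, 6\}) > 2$ thus forbids every combination of such a split with every partition of $A_1$; going through the three splits produces a short list of forbidden triples $(L(4), L(5), L(6))$. A similar analysis of $\chi(A_1 \cup \{2, 3, 5, 6\}) > 2$---noting that $\{2, 3, 5, 6\}$ contains the directed triangle $5 \to 2 \to 3 \to 5$ together with the arcs $5 \to 6$, $6 \to 3$, $2 \to 6$---further constrains the quadruple $(L(2), L(3), L(5), L(6))$. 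Since $L(5)$ and $L(6)$ are already pinned down by the first analysis, this cascades into strong restrictions on $L(2)$ and $L(3)$.

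In every configuration surviving both sets of constraints, I would then construct explicit $2$-colorings. The set $\{0, 2, 4\}$ is itself a directed triangle $0 \to 2 \to 4 \to 0$, so the construction mirrors the impossibility analysis for $\{4, 5, 6\}$: split $\{0, 2, 4\}$ into an arc and a singleton, then match the pieces with suitable intervals of $A_1$ using $L(0)$, $L(2)$, $L(4)$. The set $\{1, 3, 5, 6\}$ contains the transitive triple $\{1, 5, 6\}$ (with source $1$ and sink $6$) together with vertex $3$, giving a natural split $\{1, 5, 6\} \sqcup \{3\}$---among other options---to fit around a partition of $A_1$.

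The main obstacle I anticipate is the enumeration in the second step: producing a short, human-readable catalogue of forbidden types rather than drowning in an unwieldy case list. I expect the way forward is to isolate a few structural invariants---such as whether each $L(v)$ is an interval and, if so, its endpoints---so that both the impossibility catalogue and the construction step collapse into a handful of uniform patterns instead of a long table of cases.
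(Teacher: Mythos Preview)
Your plan misses the key structural input that the paper sets up immediately before the four claims: $A_1\cup B$ is not an arbitrary gluing of $TT_5$ with $W_1$, but one that contains a copy of $\mathrm{Pal}_{11}$ (this is Theorem~\ref{thm:contains P11}/Corollary~\ref{cor:gluing}). Concretely, $A_1$ is written as $\{a,b,c,d,x\}$ where $\{a,b,c,d\}\cup B$ is $\mathrm{Pal}_{11}$; since $\mathrm{Pal}_{11}$ has a unique $TT_4$ up to automorphism and $W_1$ has no non-trivial automorphism, \emph{every} arc between $\{a,b,c,d\}$ and $B$ is already fixed (Figure~\ref{fig:situation174}). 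The only unknowns are the rank of $x$ among $a,b,c,d$ and the arcs between $x$ and $B$.

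With this in hand, the paper's proof of Claim~\ref{claim:3} is a short chain of forced deductions on $x$ alone: the two hypotheses successively force $b<x$, the arcs $x6$ and $x2$, and then $d<x$; at that point a single explicit $2$-coloring is written down for each of $A_1\cup\{0,2,4\}$ and $A_1\cup\{1,3,5,6\}$. There is no catalogue of attachment types at all.

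By contrast, your plan treats all five attachment profiles $L(0),\dots,L(6)$ (restricted to the vertices in play) as free, and you correctly identify the resulting enumeration as the main obstacle. That obstacle is real: without the $\mathrm{Pal}_{11}$ reduction you are attempting to prove a strictly stronger statement than the one the paper needs, and there is no indication it admits the ``handful of uniform patterns'' you hope for. The fix is not a cleverer bookkeeping of the $L(v)$'s but to import the $\mathrm{Pal}_{11}$ structure first, after which four of your five $L$-profiles become constants and the whole argument collapses to tracking the single vertex $x$.
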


\begin{claim}
\label{claim:4}
If $\chi(A_1\cup\{1,2,3\})>2$ and $\chi(A_1\cup\{2,3,5,6\})>2$, then $\chi(A_1\cup\{0,1,6\})=\chi(A_1\cup\{2,3,4,5\})=2$.
\end{claim}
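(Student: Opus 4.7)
My plan is to derive Claim~\ref{claim:4} from Claim~\ref{claim:3} via a symmetry of $W_1$. The two claims are strikingly parallel: the hypothesis sets $\{1,2,3\}$ and $\{4,5,6\}$ are swapped, $\{2,3,5,6\}$ appears in both, and the conclusion sets match as $\{0,2,4\}\leftrightarrow\{0,1,6\}$ and $\{1,3,5,6\}\leftrightarrow\{2,3,4,5\}$. This pattern strongly suggests the existence of an anti-automorphism of $W_1$ (that is, an isomorphism from $W_1$ to its reverse) realising these correspondences.

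My candidate is the map $\tau\colon V(W_1)\to V(W_1)$ defined by $\tau(0)=0$, $\tau(1)=4$, $\tau(4)=1$, $\tau(2)=6$, $\tau(6)=2$, $\tau(3)=5$, $\tau(5)=3$. The first step is to verify, by going through the $21$ arcs of $W_1$ as pictured in Figure~\ref{fig:T7}, that $uv$ is an arc of $W_1$ iff $\tau(v)\tau(u)$ is. Once this is done, one reads off directly that $\tau(\{1,2,3\})=\{4,5,6\}$, $\tau(\{2,3,5,6\})=\{2,3,5,6\}$, $\tau(\{0,2,4\})=\{0,1,6\}$ and $\tau(\{1,3,5,6\})=\{2,3,4,5\}$, exactly the correspondence needed.

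Next I would construct a companion tournament $T'_{12}$ on the vertex set $V(T_{12})=A_1\cup B$ by reversing every arc of $T_{12}$ and then relabelling the vertices of $B$ via $\tau^{-1}$. Since reversal preserves transitivity, $T'_{12}[A_1]$ is again $TT_5$; and since $\tau$ is an anti-automorphism of $W_1$, the restriction $T'_{12}[B]$ is again $W_1$ with the labelling of Figure~\ref{fig:T7}. Thus $T'_{12}$ is another gluing of $TT_5$ with $W_1$, and Claim~\ref{claim:3} applies to it. More generally, for every $S\subseteq B$ the subtournament $T'_{12}[A_1\cup S]$ is isomorphic, via the map that is the identity on $A_1$ and $\tau$ on $S$, to the reverse of $T_{12}[A_1\cup \tau(S)]$; as chromatic number is invariant under reversal and isomorphism, I obtain $\chi_{T'_{12}}(A_1\cup S)=\chi_{T_{12}}(A_1\cup \tau(S))$.

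Applying Claim~\ref{claim:3} to $T'_{12}$ and translating via this identity turns its hypotheses $\chi_{T'_{12}}(A_1\cup\{4,5,6\})>2$ and $\chi_{T'_{12}}(A_1\cup\{2,3,5,6\})>2$ into $\chi_{T_{12}}(A_1\cup\{1,2,3\})>2$ and $\chi_{T_{12}}(A_1\cup\{2,3,5,6\})>2$, and its conclusions into $\chi_{T_{12}}(A_1\cup\{0,1,6\})=\chi_{T_{12}}(A_1\cup\{2,3,4,5\})=2$, which is exactly Claim~\ref{claim:4}. The main obstacle I anticipate is to be careful, in the definition of $T'_{12}$, about the direction of the arcs between $A_1$ and $B$, so as to ensure that $T'_{12}$ genuinely lies in the class of gluings covered by Claim~\ref{claim:3}; should this symmetry reduction hit a snag, the fallback is to mimic the case analysis used in the proof of Claim~\ref{claim:3} with the roles of the two triangles $\{1,2,3\}$ and $\{4,5,6\}$ of $W_1$ interchanged throughout.
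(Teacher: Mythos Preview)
Your approach is correct and genuinely different from the paper's. The paper proves Claim~\ref{claim:4} by a direct case analysis entirely parallel to that of Claim~\ref{claim:3}: it first forces $x<b$ and the orientations of the arcs $1x$, $2x$, $6x$ from the hypotheses, rules out $a<x<b$, and then exhibits the two required $2$-colorings explicitly. Your argument instead exploits the involutive anti-automorphism $\tau$ of $W_1$ (fixing $0$ and swapping $1\leftrightarrow 4$, $2\leftrightarrow 6$, $3\leftrightarrow 5$) to transport Claim~\ref{claim:3} to Claim~\ref{claim:4}. This is legitimate: the proofs of the claims use nothing about $T_{17}$ beyond the structure of $T_{12}$ displayed in Figure~\ref{fig:situation174}, and your reversed-and-relabelled $T'_{12}$ satisfies that structure with $(a',b',c',d')=(d,c,b,a)$ --- indeed, the anti-automorphism of $\mathrm{Pal}_{11}$ given by negation restricts to $\tau$ on $W_1$ while reversing the transitive order on the distinguished $TT_4$, so the bipartite pattern of Figure~\ref{fig:situation174} is preserved. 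Your symmetry reduction thus eliminates a second round of case analysis; the price is the bookkeeping you flagged (checking that $\tau$ really is an anti-automorphism and that the cross-arcs match), which is routine but must be carried out. The paper's direct method is more self-contained and avoids invoking properties of $\mathrm{Pal}_{11}$ or $W_1$ beyond the arc list, at the cost of some repetition.
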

Let us now explain how we can derive Lemma~\ref{lem:2col} now follows from these claims. First note that, by symmetry, all these claims hold with $A_1$ replaced by $A_2$. By Claim~\ref{claim:1}, Lemma~\ref{lem:2col} holds if $\chi(A_1\cup\{2,3,5,6\})=2$ or $\chi(A_2\cup\{2,3,5,6\})=2$. So from now on we assume that $\chi(A_1\cup\{2,3,5,6\})>2$ and $\chi(A_2\cup\{2,3,5,6\})>2$.
Observe that if $\chi(A_1\cup\{0,1,2,3\})=\chi(A_2\cup\{0,4,5,6\})=2$, then Lemma~\ref{lem:2col} holds with $B_1=\{1,2,3\}$. Therefore, by symmetry, Claim~\ref{claim:2} leads to two cases: 
\begin{itemize}

\item $\chi(A_1\cup\{0,1,2,3\})=\chi(A_2\cup\{0,1,2,3\})=2$. In that case, Lemma~\ref{lem:2col} holds with $B_i=\{0,1,2,3\}$ unless $\chi(A_i\cup\{4,5,6\})>2$ for all $i \in \{1,2\}$. In that case, we can apply Claim~\ref{claim:3} to both $A_1$ and $A_2$, and Lemma~\ref{lem:2col} holds with $B_1=\{0,2,4\}$.

\item $\chi(A_1\cup\{0,4,5,6\})=\chi(A_2\cup\{0,4,5,6\})=2$.
In that case, Lemma~\ref{lem:2col} holds with $B_i=\{1,2,3\}$ unless $\chi(A_i\cup\{1,2,3\})>2$  for all $i \in \{1,2\}$. In that case, we can apply Claim~\ref{claim:4} to both $A_1$ and $A_2$, and Lemma~\ref{lem:2col} holds with $B_1=\{0,1,6\}$.
\end{itemize}

It remains to prove the four claims. First recall that $A_1\cup B$ contains a copy of $\mathrm{Pal}_{11}$, and the missing vertex lies in $A_1$. We can thus write $A_1=\{a,b,c,d,x\}$ where $B\cup\{a,b,c,d\}$ induces a copy of $\mathrm{Pal}_{11}$ and $a,b,c,d$ are in transitive order (see Figure~\ref{fig:situation174}). Observe also that (up to automorphism), $\mathrm{Pal}_{11}$ contains a unique copy of $TT_4$. Moreover, $W_1$ has no automorphism. Therefore, there is a unique way to put the arcs between $B$ and $\{a,b,c,d\}$, depicted in Figure~\ref{fig:situation174}). Each claim thus boils down to show that we can find a $2$-coloring of the right subgraph regardless of the neighborhoods of $x$. 

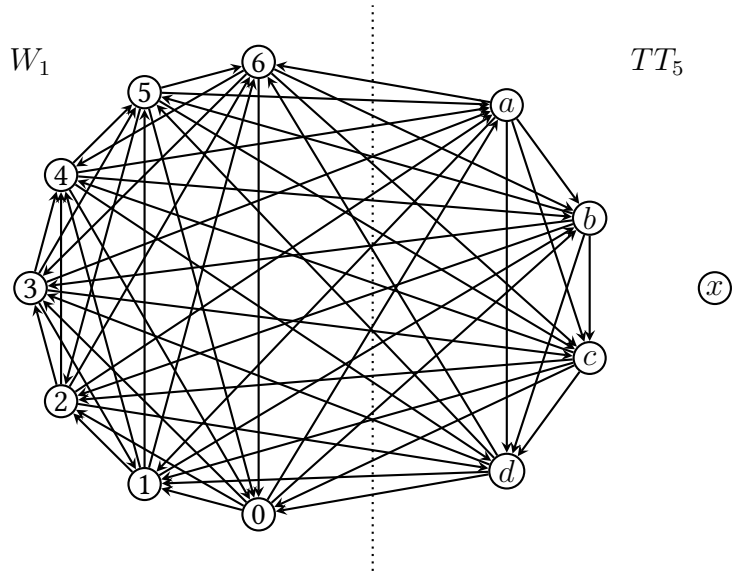
\begin{figure}[!ht]
\centering
\begin{tikzpicture}[thick,every node/.style={draw,circle,inner sep=1pt,minimum size=12pt},scale=1.5]
\draw[dotted] (0,-2.5) -- (0,2.5);
\node[color=white] at (2.5,2) {\textcolor{black}{$TT_5$}};
\node[color=white] at (-3,2) {\textcolor{black}{$W_1$}};
\tikzset{xshift=-1cm}
\node (6) at (90:2) {6};
\node (5) at (120:2) {5};
\node (4) at (150:2) {4};
\node (3) at (180:2) {3};
\node (2) at (210:2) {2};
\node (1) at (240:2) {1};
\node (0) at (270:2) {0};
\tikzset{xshift=1cm}
\node (a) at (54:2) {$a$};
\node (b) at (18:2) {$b$};
\node (c) at (-18:2) {$c$};
\node (d) at (-54:2) {$d$};
\node (x) at (0:3) {$x$};
\draw[-{stealth}] (0) -> (1);
\draw[-{stealth}] (0) to (2);
\draw[-{stealth}] (0) -> (3);
\draw[-{stealth}] (4) -> (0);
\draw[-{stealth}] (5) to (0);
\draw[-{stealth}] (6) -> (0);
\draw[-{stealth}] (1) -> (2);
\draw[-{stealth}] (2) -> (3);
\draw[-{stealth}] (3) -> (1);
\draw[-{stealth}] (4) -> (5);
\draw[-{stealth}] (5) -> (6);
\draw[-{stealth}] (6) -> (4);
\draw[-{stealth}] (1) -> (5);
\draw[-{stealth}] (1) -> (6);
\draw[-{stealth}] (1) -> (4);
\draw[-{stealth}] (2) -> (6);
\draw[-{stealth}] (2) -> (4);
\draw[-{stealth}] (5) -> (2);
\draw[-{stealth}] (3) -> (4);
\draw[-{stealth}] (3) -> (5);
\draw[-{stealth}] (6) -> (3);
\draw[-{stealth}] (a) -> (b);
\draw[-{stealth}] (a) -> (c);
\draw[-{stealth}] (a) -> (d);
\draw[-{stealth}] (b) -> (c);
\draw[-{stealth}] (b) -> (d);
\draw[-{stealth}] (c) -> (d);
\draw[-{stealth}] (2) -> (a);
\draw[-{stealth}] (b) -> (2);
\draw[-{stealth}] (c) -> (2);
\draw[-{stealth}] (2) -> (d);
\draw[-{stealth}] (3) -> (a);
\draw[-{stealth}] (b) -> (3);
\draw[-{stealth}] (3) -> (c);
\draw[-{stealth}] (d) -> (3);
\draw[-{stealth}] (4) -> (a);
\draw[-{stealth}] (4) -> (b);
\draw[-{stealth}] (c) -> (4);
\draw[-{stealth}] (4) -> (d);
\draw[-{stealth}] (5) -> (a);
\draw[-{stealth}] (b) -> (5);
\draw[-{stealth}] (5) -> (c);
\draw[-{stealth}] (d) -> (5);
\draw[-{stealth}] (0) -> (a);
\draw[-{stealth}] (0) -> (b);
\draw[-{stealth}] (c) -> (0);
\draw[-{stealth}] (d) -> (0);
\draw[-{stealth}] (a) -> (1);
\draw[-{stealth}] (1) -> (b);
\draw[-{stealth}] (c) -> (1);
\draw[-{stealth}] (d) -> (1);
\draw[-{stealth}] (a) -> (6);
\draw[-{stealth}] (6) -> (b);
\draw[-{stealth}] (6) -> (c);
\draw[-{stealth}] (d) -> (6);
\end{tikzpicture}
\caption{The known arcs in $A_1\cup B$.}
\label{fig:situation174}
\end{figure}

\begin{proof}[Proof of Claim~\ref{claim:1}]
We separate five cases depending on the rank of $x$ in the transitive order among $\{a,b,c,d,x\}$. 
\begin{itemize}
\item If $x<a$, then $\{x,4,a,b,d\}$ induces a $TT_5$, with either $x$ or $4$ as source depending on the orientation of the arc $4x$. To improve readability, we will present the vertices of the upcoming transitive tournaments in order within their set, using parenthesis when some vertices might be flipped depending on the orientation of the arc between them. In particular, the above copy of $TT_5$ will be written $\{(x,4),a,b,d\}$. Together with $\{c,0,1\}$ it gives a $2$-coloring of $A_1\cup\{0,1,4\}$.
\item If $a<x<b$, then $(\{c,4,d,0\},\{a,(1,x),b\})$ is a $2$-coloring of $A_1\cup\{0,1,4\}$.
\item If $b<x<c$, then either $(\{4,a,b,x,d\},\{c,0,1\})$ or $(\{0,a,1,b\},\{x,c,4,d\})$ is a $2$-coloring of $A_1\cup\{0,1,4\}$ (depending on the arc between $4$ and $x$).
\item If $c<x<d$, then $(\{0,a,1,b\},\{c,(4,x),d\})$ is a $2$-coloring of $A_1\cup\{0,1,4\}$.
\item If $d<x$, then $(\{a,c,d,(1,x)\},\{4,0,b\})$ is a $2$-coloring of $A_1\cup\{0,1,4\}$.\qedhere
\end{itemize}
\end{proof}

\begin{proof}[Proof of Claim~\ref{claim:2}]
Observe that $\gamma_1=(\{a,c,d,1\},\{0,b,2,3\})$ and $\gamma_2=(\{b,c,2,d\},\{0,3,a,1\})$ are two $2$-colorings of $\{a,b,c,d,0,1,2,3\}$, and that $\gamma_3=(\{a,6,b,c\},\{4,d,5,0\})$ and $\gamma_4=(\{4,a,b,d\},\{5,6,c,0\})$ are two $2$-colorings of $\{a,b,c,d,0,4,5,6\}$. We separate five cases depending on the rank of $x$ in the transitive order among $\{a,b,c,d,x\}$. In each case, we look for an extension of $\gamma_1$ or $\gamma_2$ into a $2$-coloring of $A_1\cup\{0,1,2,3\}$ or an extension of $\gamma_3,\gamma_4$ to $A_2\cup\{0,4,5,6\}$.
\begin{itemize}
\item If $x<a$, then we can extend $\gamma_4$.
\item If $a<x<b$, then we can extend $\gamma_3$.
\item If $b<x<c$, then we must have the arc $2x$ (resp. $x6$) for otherwise we can extend $\gamma_2$ (resp. $\gamma_3$). Now if there is an arc $0x$, we can extend $\gamma_1$, otherwise there is an arc $x0$ and we can extend $\gamma_4$.
\item If $c<x<d$, then we can extend $\gamma_2$.
\item If $d<x$, then we can extend $\gamma_1$.\qedhere
\end{itemize}
\end{proof}

\begin{proof}[Proof of Claim~\ref{claim:3}]
We know that $b<x$ and $x6$ is an arc otherwise $(\{x,4,a,b,d\},\{5,6,c\})$ or $(\{a,6,x,b,c\},\{4,d,5\})$ is a $2$-coloring of $A_1\cup\{4,5,6\}$, which is not possible by hypothesis. Moreover, $x2$ is an arc otherwise $(\{2,a,x,d,6\},\{b,3,5,c\})$ is a $2$-coloring of $A_1\cup \{2,3,5,6\}$. Finally, we have $d<x$, otherwise either $(\{5,a,6,c\},\{b,x,2,d,3\})$ or $(\{2,a,d,6\},\{b,3,x,5,6\})$ is a $2$-coloring of $A_1\cup\{2,3,5,6\}$ (depending on the arc $x3$). 

Now $(\{4,a,b,d\},\{c,0,x,2\})$ is a $2$-coloring of $A_1\cup\{0,2,4\}$ and $(\{a,d,1,x,6\},\{b,3,5,c\})$ is a $2$-coloring of $A_1\cup\{1,3,5,6\}$.
\end{proof}

\begin{proof}[Proof of Claim~\ref{claim:4}]
We know that $x<b$ and $1x,2x$ are arcs otherwise $(\{a,c,d,1,x\},\{b,2,3\})$ or $(\{3,a,c,1\},\{b,2,x,d\})$ is a $2$-coloring of $A_1\cup\{1,2,3\})$ (depending on the orientation of $xd$). Moreover, $6x$ is an arc otherwise $(\{2,a,x,d,6\},\{b,3,5,c\})$ is a $2$-coloring of $A_1\cup\{2,3,5,6\}$.

Assume that $a<x<b$. Then we have the arc $x3$ otherwise $(\{b,c,2,d\},\{3,a,1,x\})$ is a $2$-coloring of $A_1\cup\{1,2,3\}$. But this is impossible since $(\{2,a,d,6\},\{x,b,3,5,c\})$ or $(\{5,a,6,x,c\},\{b,2,d,3\})$ is a $2$-coloring of $A_1\cup\{2,3,5,6\}$ (depending on the arc $x5$). 

Therefore we have $x<a$, so $(\{a,c,d,1\},\{6,0,x,b\})$ is a $2$-coloring of $A_1\cup\{0,1,6\}$ and $(\{2,4,x,a,d\},\{b,3,5,c\})$ is a $2$-coloring of $A_1\cup\{2,3,4,5\}$. 
\end{proof}

\section{A 5-chromatic tournament on 19 vertices}
\label{sec:19}
In his seminal paper~\cite{NL-tournaments}, Neumann-Lara said that there exists a $5$-chromatic tournament on $19$ vertices but gave no details on the structure of this tournament or the proof of this fact. 
He actually explained how to construct such a tournament six years later in~\cite{neumann2000dichromatic}, as an illustration of his results on the Zykov sums of digraphs.
Before we knew about this paper, we looked for a 5-chromatic tournament and found independently the same tournament.
For the sake of completeness, we present this tournament in this section and prove its $5$-chromaticity (the construction is outlined in Figure~\ref{fig:19}). 
We leave the existence of another 5-chromatic tournament on 19 vertices as an important open question.

\begin{theorem}\label{thm:19vertices}
There is a $5$-chromatic tournament on $19$ vertices.
\end{theorem}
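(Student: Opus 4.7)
The plan is to explicitly display a tournament $T_{19}$ on $19$ vertices (as in Figure~\ref{fig:19}) and establish both $\chi(T_{19}) \leq 5$ and $\chi(T_{19}) \geq 5$. Following Neumann-Lara's construction via Zykov sums, a natural choice is to blow up the vertices of a small base digraph (e.g.\ the directed triangle $\vec{C}_3$) by carefully chosen tournaments whose sizes sum to $19$, in such a way that the symmetry of the base is inherited by $T_{19}$. The upper bound $\chi(T_{19}) \leq 5$ is then immediate by exhibiting an explicit partition of $V(T_{19})$ into five transitive subtournaments, readable directly from the Zykov description (using the fact that a transitive subtournament spanning two consecutive blocks of the sum remains transitive).

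The delicate direction is the lower bound $\chi(T_{19}) \geq 5$. I would argue by contradiction and assume $V(T_{19})$ admits a partition into four transitive subtournaments $V_1,\ldots,V_4$. By pigeonhole, some $V_i$ has size at least $\lceil 19/4 \rceil = 5$ and therefore contains a copy of $TT_5$. Thus it suffices to prove that for every set $A \subseteq V(T_{19})$ inducing $TT_k$ with $k \geq 5$, the residual tournament $T_{19} - A$, on at most $14$ vertices, is not $3$-colorable.

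The plan is then to enumerate copies of $TT_5$ (and larger transitive subtournaments) in $T_{19}$ up to the action of $\mathrm{Aut}(T_{19})$, using the symmetry coming from the Zykov construction to reduce the case count to a handful of orbits. For each orbit representative $A$, I would verify that $\chi(T_{19} - A) \geq 4$. When $|T_{19} - A| = 14, 13$ or $12$, Theorem~\ref{thm:14}, Theorem~\ref{thm:unique T12} and Corollary~\ref{cor:gluing} severely restrict the shape of any $4$-chromatic or $3$-chromatic extension, and the hypothetical $3$-coloring is easily shown to be incompatible with the arcs going to $A$. When $|T_{19} - A|$ drops to $7$ or fewer, Theorem~\ref{thm:T7 3chrom} lists the four $3$-chromatic candidates and one verifies by inspection that $T_{19} - A$ is not among them (or, if it is, that the arcs to $A$ still prevent a global $4$-coloring).

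The main obstacle is taming the size of the case analysis. Even after quotienting by $\mathrm{Aut}(T_{19})$, several inequivalent copies of $TT_k$ with $k \geq 5$ may arise and each residual needs its own certificate of non-$3$-colorability. The key will be to find a uniform structural obstruction—for instance, identifying in every residual tournament a subtournament forced to use three distinct colors together with a vertex (or small set) in $A$ whose neighborhood forbids every extension—so that all orbits can be handled simultaneously rather than by lengthy one-by-one verifications.
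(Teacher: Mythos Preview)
Your proposal has two concrete gaps.

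First, you never actually specify $T_{19}$. You suggest a Zykov sum over $\vec C_3$, but blowing up a directed triangle by three tournaments whose orders sum to $19$ does not obviously (and, as far as is known, does not) yield a $5$-chromatic tournament: transitive sets in such a sum may span two consecutive blocks, and this freedom makes $4$-colorings easy to find. The paper's construction is different and specific: take $\mathrm{Pal}_7$ and blow up each of the six vertices $1,\dots,6$ into a directed triangle, leaving vertex $0$ as a single vertex. This yields $1+6\cdot 3=19$ vertices. The particular choice of $\mathrm{Pal}_7$ is essential, because $\mathrm{Pal}_7$ is $TT_4$-free.

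Second, even granting the correct construction, your lower-bound strategy (enumerate all maximal transitive sets of size $\ge 5$ up to automorphism and certify each residual is not $3$-colorable) is only a plan, and you yourself flag the case analysis as the main obstacle. The paper avoids this entirely with a three-line counting argument: in any $4$-coloring of $D$, each blown-up triangle receives at least two colors, so the induced multicoloring of $\mathrm{Pal}_7$ uses at least $1+6\cdot 2=13$ vertex--color incidences; by pigeonhole some color hits $\lceil 13/4\rceil=4$ vertices of $\mathrm{Pal}_7$, and since color classes project to transitive subtournaments of $\mathrm{Pal}_7$, this contradicts the $TT_4$-freeness of $\mathrm{Pal}_7$. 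No enumeration of $TT_5$'s, no residual checks, no appeal to Theorems~\ref{thm:14}, \ref{thm:unique T12} or Corollary~\ref{cor:gluing}. The uniform obstruction you were looking for is exactly ``$\mathrm{Pal}_7$ is $TT_4$-free'', read through the blow-up.
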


\begin{proof}
We consider the tournament $\mathrm{Pal}_7$ with vertices labeled as in Figure~\ref{fig:T7}.
Let $D$ be the 19-vertex tournament obtained from $\mathrm{Pal}_7$ by blowing-up every vertex besides $0$ into a triangle (see Figure~\ref{fig:19}). More precisely, every vertex $i \in \{1,\ldots,6\}$ is replaced by three vertices $i_1,i_2,i_3$ inducing a triangle and $i_jk_\ell$ is an arc of $D$ if and only if $ik$ is an arc of $\mathrm{Pal}_7$ and $i_j0$ is an arc of $D$ if and only if $i0$ is an arc of $\mathrm{Pal}_7$.

For the sake of a contradiction, suppose that $D$ admits a $4$-coloring. 
In particular, the vertices of the each triangle $i_1,i_2,i_3$ receive at least two different colors.
We consider the multicoloring of $\mathrm{Pal}_7$ naturally associated with the coloring of $D$
(that is, each vertex $i \in \{1,\ldots,6\}$ is given the colors of $i_1,i_2,i_3$)
and note the following.
\begin{itemize}
	\item Vertex $0$ is colored with one color.        
    \item Every vertex $1, \dots, 6$ is colored with at least two colors.
    \item Every color class induces a transitive tournament.
\end{itemize}

It follows that there are at least $1 + 6\times 2= 13$ associations of colors to vertices of $\mathrm{Pal}_7$. Therefore, some color appears on at least $\lceil \frac{13}{4}\rceil = 4$ vertices of $\mathrm{Pal}_7$, which is a contradiction since $\mathrm{Pal}_7$ is $TT_4$-free.
\end{proof}

For $n$ up to 4, we actually know all the smallest $n$-chromatic tournaments, and this knowledge has been crucial for determining $n_5$. Thus, we believe it could be important to determine how many 5-chromatic tournaments there are on 19 vertices.
We remark that for every tournament on $7$ vertices distinct from $\mathrm{Pal}_7$, the construction of blowing-up all vertices into triangles results in a $4$-colorable tournament
(on $21$ vertices).
Also, note that any small modification of $D$ such as reverting or removing an arbitrary arc would make it $4$-colorable too.

If $D$ is the unique 5-chromatic tournament on 19 vertices, it would imply that removing any $TT_5$ from a $6$-chromatic tournament on $24$ vertices yields $D$, and that any $6$-chromatic tournament on $25$ vertices is either $TT_6$-free or is obtained by gluing $TT_6$ with $D$. However, while such a result would be a nice step forward, the methods from Sections 4 and 6 still do not seem powerful enough in their current state to tackle the case of $6$-chromatic tournaments.

\begin{figure}[!ht]
\centering
\begin{tikzpicture}[thick,v/.style={minimum size = 10pt, circle}]
\node[v,draw,outer sep=4pt] (0) at (90:4) {};
\node[v] (1) at (142:4) {\tri};
\node[v] (2) at (193:4) {\tri};
\node[v] (3) at (245:4) {\tri};
\node[v] (4) at (296:4) {\tri};
\node[v] (5) at (348:4) {\tri};
\node[v] (6) at (39:4) {\tri};
\node at (0,0) {$\mathrm{Pal}_7$};
\draw[ultra thick, ->] (0) to (1);
\draw[ultra thick,->] (0) to (2);
\draw[ultra thick,->] (0) to (4);
\draw[ultra thick,->] (1) to (2);
\draw[ultra thick,->] (1) to (3);
\draw[ultra thick,->] (1) to (5);
\draw[ultra thick,->] (2) to (3);
\draw[ultra thick,->] (2) to (4);
\draw[ultra thick,->] (2) to (6);
\draw[ultra thick,->] (3) to (4);
\draw[ultra thick,->] (3) to (5);
\draw[ultra thick,->] (3) to (0);
\draw[ultra thick,->] (4) to (5);
\draw[ultra thick,->] (4) to (6);
\draw[ultra thick,->] (4) to (1);
\draw[ultra thick,->] (5) to (6);
\draw[ultra thick,->] (5) to (0);
\draw[ultra thick,->] (5) to (2);
\draw[ultra thick,->] (6) to (0);
\draw[ultra thick,->] (6) to (1);
\draw[ultra thick,->] (6) to (3);
\end{tikzpicture}
\caption{The construction of a $5$-chromatic tournament on $19$ vertices. Each thick arrow represents a set of arcs with the same orientation. Thick arrows outline the structure of $\mathrm{Pal}_7$}
\label{fig:19}
\end{figure}
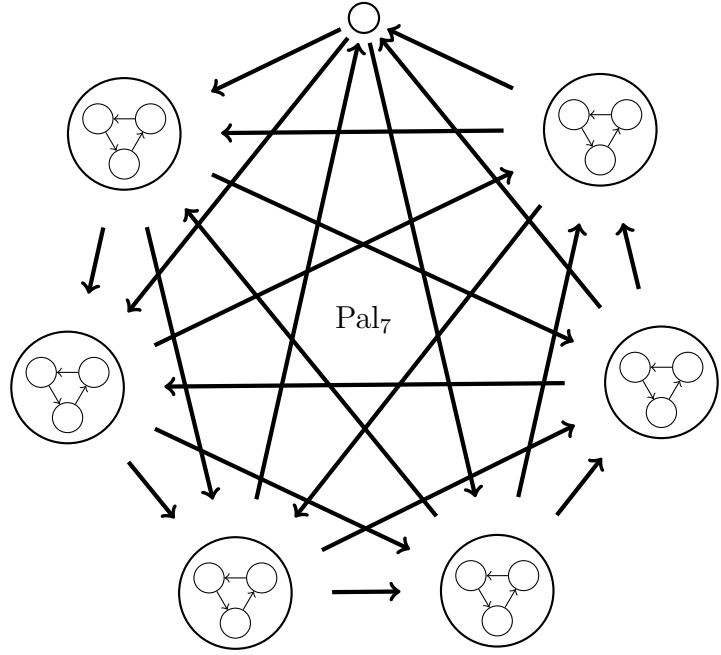

\section{The answer to Neumann-Lara's question}
\label{sec:18}

In Sections~\ref{sec:17} and~\ref{sec:19},
we showed that all tournaments of order $17$ are $4$-colorable
and that there is a $5$-chromatic tournament of order $19$.
In this section, we consider tournaments on $18$ vertices
and we answer the question on the order of a smallest $5$-chromatic tournament.
We present a computer proof of the following.

\begin{theorem}
\label{thm:18}
Every tournament on $18$ vertices is $4$-colorable.
\end{theorem}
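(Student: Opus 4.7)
The plan is to reduce the question to a finite case analysis that can be carried out by the branching procedure \texttt{completions} from Section~\ref{sec:12}, following the blueprint outlined in the introduction. Let $T$ be an $18$-vertex tournament with $\chi(T)\ge 5$. By Theorem~\ref{thm:14}, $T$ contains a set $A_1$ of five vertices inducing $TT_5$; the remaining $13$ vertices induce a tournament $T'$ which must satisfy $\chi(T')\ge 4$, so by Theorem~\ref{thm:14} again $T'$ contains a copy $A_2$ of $TT_5$. Thus $T$ always admits two vertex-disjoint copies of $TT_5$, and I would split on whether a third disjoint copy exists.

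In the \emph{three-$TT_5$ case}, I would write $V(T)=A_1\cup A_2\cup A_3\cup B$ where each $A_i$ induces $TT_5$ and $|B|=3$. Since removing any two of the $A_i$'s leaves a $4$-chromatic $8$-vertex tournament, $B$ cannot be transitive (else each $A_i\cup B$ would be $3$-colorable by combining a $2$-coloring of $A_i$ with a color on $B$, eventually producing a $4$-coloring of $T$); hence $B$ induces a directed triangle. This lets me carry out the three-step enumeration from the introduction: first enumerate, using \texttt{completions}, the \emph{$8$-completions} (gluings of $TT_5$ with the directed triangle yielding a $3$-chromatic tournament), obtaining $256$ candidates; then for each ordered pair of $8$-completions sharing the triangle, enumerate the \emph{$13$-completions} (orientations of the $25$ arcs between the two $TT_5$'s giving a $4$-chromatic result) with the same algorithm; finally, assemble ordered triples of mutually compatible $8$-completions and check each of the resulting $18$-vertex tournaments for $5$-chromaticity. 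Crucially, the trimming step of Algorithm~\ref{algo:dsmash} cuts any branch whose current vertex set already admits a partition into three transitive subtournaments, keeping the search tractable.

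In the \emph{exactly two-$TT_5$ case}, the residual $8$-vertex tournament $X_8=T\setminus(A_1\cup A_2)$ is $TT_5$-free. I would first enumerate the $TT_5$-free $8$-vertex tournaments of chromatic number at least $3$ (there are $94$ of them); for each $X_8$ I would generate all gluings of $X_8$ with $TT_5$ that are $4$-chromatic, using a slightly modified \texttt{completions} that also trims branches in which a second disjoint $TT_5$ appears inside $X_8\cup TT_5$. For each pair of such $13$-vertex oriented graphs $(G_1,G_2)$ sharing $X_8$, I would glue them along $X_8$ and discard \emph{incompatible pairs}: those admitting a $4$-coloring whose colors $\{1,2\}$ appear only in one $TT_5$ and $\{3,4\}$ only in the other, since then every orientation of the $25$ missing arcs between the two copies of $TT_5$ preserves $4$-colorability. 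For the remaining compatible pairs, I would enumerate orientations of those $25$ arcs and apply the $4$-colorability test from Section~\ref{sec:12} to every resulting $18$-vertex tournament.

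The main obstacle is computational rather than conceptual: even with the branching/trimming scheme, the raw number of $18$-vertex tournaments has $31$ digits and $4$-colorability is \textbf{NP}-complete, so everything hinges on whether the trimming (stopping as soon as the partial orientation already admits a partition into three transitive subtournaments) and the incompatibility filter prune enough of the search tree to make the enumeration finish in reasonable time. A secondary difficulty is making the $k$-colorability test fast inside the recursion, which I would handle as in Section~\ref{sec:12} by maintaining incrementally the list $L$ of maximal transitive subtournaments rather than recomputing it each time an arc is added, and by always assigning a maximum-size transitive set to the first color class. Once the enumeration completes with no $5$-chromatic tournament found in either case, Theorem~\ref{thm:18} follows, and combined with Theorem~\ref{thm:19vertices} this pins down $n_5=19$.
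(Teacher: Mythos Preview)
Your overall plan matches the paper's: split into the ``exactly two disjoint $TT_5$'' and ``three disjoint $TT_5$'' cases, and in each case run the branching procedure \texttt{completions} with the appropriate trimming and compatibility filters. However, there is a genuine gap at the very start.

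You write: ``the remaining $13$ vertices induce a tournament $T'$ which must satisfy $\chi(T')\ge 4$, so by Theorem~\ref{thm:14} again $T'$ contains a copy $A_2$ of $TT_5$.'' This step is incorrect. Theorem~\ref{thm:14} concerns tournaments on $14$ vertices, not $13$, and it says nothing about chromatic number. In fact the tournament $X_{13}$ of Theorem~\ref{thm:X13} is $TT_5$-free and has $\chi(X_{13})\ge 4$ (since any transitive subtournament has at most $4$ vertices, three of them cannot cover $13$ vertices). So it is entirely possible that $T'$ is $TT_5$-free while still satisfying $\chi(T')\ge 4$, and your argument does not produce a second disjoint $TT_5$ in that case. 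The paper devotes all of Subsection~\ref{subsec:61} (Lemma~\ref{lem:2TT5in18}) to this: when $T'\cong X_{13}$, one adds back a vertex $a_0$ from the original $TT_5$, uses Theorem~\ref{thm:14} on the resulting $14$-vertex tournament to locate a second $TT_5$ through $a_0$, and then exploits the rigidity of $X_{13}$ (Propositions~\ref{clm:struct_X13} and~\ref{claim:main6.1}) to exhibit two disjoint copies of $TT_5$ inside $T$. Without this, the case split you rely on is not justified.

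A smaller slip: in the three-$TT_5$ case you say ``removing any two of the $A_i$'s leaves a $4$-chromatic $8$-vertex tournament''. An $8$-vertex tournament is never $4$-chromatic; what you need (and what the paper uses) is that each $A_i\cup B$ is $3$-chromatic, and the quickest way to see $B$ is a directed triangle is simply that a transitive $B$ would make $\{A_1,A_2,A_3,B\}$ a $4$-coloring of $T$.
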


We cannot simply check every tournament on $18$ vertices
since there are more than $10^{30}$ such tournaments~\cite{oeis}.
We first need to reduce the problem to be able to solve it even with the help of a computer. 
To this end, we consider a hypothetical counterexample to Theorem~\ref{thm:18}
and investigate its structural properties.

We let $T$ be a $5$-chromatic tournament of order $18$,
and we apply a similar but more involved reasoning as in Sections~\ref{sec:12} and~\ref{sec:17}.
First, we show that $T$ contains at least two disjoint copies of $TT_5$ in Subsection~\ref{subsec:61}.
In Subsections~\ref{subsec:62} and~\ref{subsec:63}, we then discuss two cases based on the number of disjoint copies of $TT_5$ in $T$.
For each case, we present an algorithm which leads to a contradiction with the choice of $T$.

\subsection{$T$ must contain two disjoint $TT_5$}
\label{subsec:61}

This section is devoted to proving the following generalization of Lemma~\ref{lem:2TT5in17}.

\begin{lemma}
\label{lem:2TT5in18}
The tournament $T$ contains at least two disjoint copies of $TT_5$.
\end{lemma}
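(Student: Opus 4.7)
The plan is to mirror the strategy of Lemma~\ref{lem:2TT5in17}, adapted to the 18-vertex regime. First, applying Theorem~\ref{thm:14}, I would find a $TT_5$ $A=\{a_1,\dots,a_5\}$ in $T$ (labelled in transitive order) and examine $T-A$. If it contains a copy of $TT_5$, that copy is automatically vertex-disjoint from $A$ and we are done. Otherwise, $T-A$ is a $TT_5$-free tournament on $13$ vertices, and Theorem~\ref{thm:X13} forces $T-A=X_{13}$.

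The remaining task is to reach a contradiction with $\chi(T)=5$ by producing a $4$-coloring of $T$. Note first that $\chi(X_{13})=4$: the lower bound follows since $X_{13}$ is $TT_5$-free (so every color class has size at most $4$) and $3\cdot 4<13$, and the upper bound is witnessed by the partition $\{0,1,2,3\},\{4,5,6,7\},\{8,9,10,11\},\{12\}$. More generally, combining Proposition~\ref{clm:struct_X13} with Theorem~\ref{thm:unique T12}, for every $x\in B:=T-A$ the tournament $B-\{x\}$ admits a $3$-coloring into $TT_4$'s $D_1,D_2,D_3$, giving a $4$-coloring $\{D_1,D_2,D_3,\{x\}\}$ of $B$. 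The first attempt is to locate $x\in B$ that is \emph{rank-compatible} with $A$ (meaning its in-neighbors in $A$ form a prefix of the transitive order on $A$); then $A\cup\{x\}$ is a $TT_6$ and $\{D_1,D_2,D_3,A\cup\{x\}\}$ is the desired $4$-coloring of $T$. In parallel, applying Theorem~\ref{thm:14} to each $14$-vertex subtournament $T-(A\setminus\{a_i\})$ yields a $TT_5$ that must contain $a_i$ (the remaining $13$ vertices induce $X_{13}$, which is $TT_5$-free), hence produces $TT_5$'s $A^{(i)}=\{a_i\}\cup B_i$ with $B_i\subseteq B$ a $TT_4$; if two such $B_i,B_j$ with $i\neq j$ are vertex-disjoint, then $A^{(i)}$ and $A^{(j)}$ are the required disjoint pair.

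The main obstacle will be the residual case where no vertex of $B$ is rank-compatible with $A$ and all the $B_i$'s pairwise intersect. To handle it, the plan is to exploit the four-type classification of $TT_4$'s in $X_{13}$ given by Proposition~\ref{claim:main6.1}, together with the rigidity (no non-trivial automorphism) of the resulting $9$-vertex complements $T_1,\dots,T_4$, in order to constrain the possible orientations of arcs between $A$ and $B$. Combined with the option of splitting $A$ into a transitive prefix and suffix placed in two different color classes (and absorbing one or two vertices of $B$ into each), a finite case analysis should then yield a $4$-coloring of $T$, contradicting $\chi(T)=5$.
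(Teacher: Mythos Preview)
Your approach diverges from the paper's in a fundamental way, and the residual case you flag is a genuine gap rather than a routine wrap-up.

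The paper never tries to $4$-colour $T$. Instead it constructs two disjoint copies of $TT_5$ directly, and in fact never invokes the hypothesis $\chi(T)=5$ (so it actually shows that \emph{every} $18$-vertex tournament has two disjoint $TT_5$'s). After reducing to $T'=T\setminus A\cong X_{13}$, the paper keeps only the source $a_0$ of $A$, finds a $TT_5$ of the form $\{a_0,a'_1,a'_2,a'_3,a'_4\}$ with the $a'_i$ in $T'$, and then performs a \emph{swap}: it looks at the $13$-vertex tournament $(T'\setminus\{a'_1,\dots,a'_4\})\cup\{a_1,\dots,a_4\}$. Either this already contains a $TT_5$ (automatically disjoint from the first), or it is again isomorphic to $X_{13}$. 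In the latter case one has two isomorphisms $f,f'$ onto $X_{13}$ sharing nine vertices; Proposition~\ref{claim:main6.1} (pairwise non-isomorphism and rigidity of the four $9$-vertex complements) forces $f$ and $f'$ to coincide on those nine vertices and to satisfy $f(a_i)=f'(a'_i)$ for all $i$. Two explicit disjoint $TT_5$'s are then read off from this alignment.

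Your proposal runs two parallel threads---a colouring thread (find a rank-compatible $x$ and $4$-colour) and a disjointness thread (hope two of the $B_i$ are disjoint)---and funnels both into a residual case that you resolve only with ``a finite case analysis should then yield a $4$-colouring''. That is precisely where the difficulty lies, and nothing in your outline controls it: there are $65$ arcs between $A$ and $B$, the rigidity in Proposition~\ref{claim:main6.1} is a statement about $X_{13}$ alone and says nothing about those cross-arcs, and the condition ``all $B_i$ pairwise intersect'' is weak (each $B_i$ can be any $TT_4$ in $X_{13}$ compatible with $a_i$, and you have not fixed which one). The paper's swap trick is exactly the device that converts the rigidity of $X_{13}$ into a usable constraint linking $\{a_i\}$ and $\{a'_i\}$; your plan lacks an analogue, so as it stands the residual case is not a mop-up but the whole problem.
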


Using Lemma~\ref{lem:sub}, $T$ must contain a copy of $TT_5$. Fix an arbitrary such copy and consider the tournament $T'$ induced by the remaining $13$ vertices. If $T'$ is not $TT_5$-free, then the lemma follows. So we can assume that $T'$ is $TT_5$-free. In particular, $T'$ is isomorphic to $X_{13}$ by Theorem~\ref{thm:X13}.

Let $a_0<\cdots<a_4$ be the vertices of $T\setminus T'$ in transitive order. Note that $T'+a_0$ has 14 vertices, hence contains a $TT_5$ by Theorem~\ref{thm:14}. Since $T'$ is $TT_5$-free, this $TT_5$ contains $a_0$. Let us denote by $\{a_0,a'_1,a'_2,a'_3,a'_4\}$ the vertices of this $TT_5$, where $a'_1<\cdots<a'_4$. If $T'\setminus\{a'_1,a'_2,a'_3,a'_4\} \cup\{a_1,a_2,a_3,a_4\}$ contains $TT_5$ then we are done. Otherwise, this tournament is also isomorphic to $X_{13}$. 

By Proposition~\ref{clm:struct_X13}, there is an isomorphism $f:  T\setminus\{a'_1,a'_2,a'_3,a'_4\}\to X_{13}$ such that $f(a_1)=0$ and $f(a_2)\in\{1,2\}$. Moreover, by Proposition~\ref{claim:main6.1}, we may even assume that the quadruple $(f(a_1),f(a_2),f(a_3),f(a_4))$ is either $(0,1,2,3)$, $(0,1,3,6)$, $(0,1,6,2)$ or $(0,2,3,5)$. Similarly, we may define an isomorphism $f':T'\to X_{13}$ where $(f'(a'_1),f'(a'_2),f'(a'_3),f'(a'_4))$ is one of these four quadruples. 

Using again Proposition~\ref{claim:main6.1}, we first get that  $T'\setminus\{a'_1,a'_2,a'_3,a'_4\}$ has no non-trivial automorphism, hence $f$ and $f'$ coincide on these 9 vertices, and moreover that $f(a_i)=f'(a'_i)$ for each $i\in[1,4]$. Let $j\in[1,4]$ such that $f(a_j)=\max \{f(a_1),f(a_2),f(a_3),f(a_4)\}$. Then $\{f^{-1}(10),f^{-1}(11),f^{-1}(12),a_1,a'_1\}$ and $\{a_j,a'_j, a'_j+1,a'_j+2,a'_j+3\}$ are two disjoint copies of $TT_5$ in $T$. This concludes the proof of Lemma~\ref{lem:2TT5in18}.

\subsection{Description of the program}
\label{subsec:62}
Let $A_1,A_2$ be two disjoint copies of $TT_5$ in $T$ and $B$ be the subtournament induced by the $8$ remaining vertices. Observe that $B$ is a $3$-chromatic tournament (otherwise $T$ would be $4$-colorable). Therefore, $B$ lies among a list of $258$ tournaments\footnote{Generated using \texttt{nauty}~\cite{MCKAY201494}, see the file~\href{https://github.com/tpierron/5chromatictournaments/blob/main/3-chromatic\%20tournaments\%20on\%208\%20vertices.ipynb}{$3$-chromatic tournaments on $8$ vertices.ipynb}.}. 

Note that the direction of $105$ arcs remains unfixed (the 25 arcs between $A_1$ and $A_2$ and then $80$ arcs between $B$ and $A_1\cup A_2$), hence an exhaustive search of all these tournaments is still unreasonable. Our method consists in using an approach similar to the one used in Section~\ref{sec:17}. More precisely, we would like to prove adapt Lemma~\ref{lem:2col} and prove that, for each choice of $B$, it is possible to split $B$ into $B_1\cup B_2$ such that $A_1\cup B_1$ and $A_2\cup B_2$ are both $2$-colorable. Unfortunately this method will fail for some choices of $B$ but will permit to restrict the number of cases to consider. 

To explain how our program works, we need some terminology. Fix a $3$-chromatic $TT_5$-free tournament $B$ on $8$ vertices. Let $A$ be a copy of $TT_5$ and let $C$ be a gluing of $A$ and $B$. We say that $C$ is a $13$-\emph{completion} of $B$ if $C$ is $4$-chromatic. Observe that, in $T$, $A_1\cup B$ and $A_2\cup B$ are two $13$-completions of $B$.

The \emph{type} of $C$ is the set of subtournaments $B'$ of $B$ such that $3\leqslant |B'|\leqslant 5$ and $A\cup B'$ induce a $2$-colorable tournament. We say that two types $\mathcal{T}_1,\mathcal{T}_2$ are \emph{compatible} if the vertices of $B$ can be partitioned as $B_1\cup B_2$ where $B_1\in \mathcal{T}_1$ and $B_2\in \mathcal{T}_2$. Since $T$ is $5$-chromatic, the completions $A_1\cup B$ and $A_2\cup B$ are not compatible.

Our program works as follows. For each possible choice of $B$, we generate all of its $13$-completions using Algorithm~\ref{algo:dsmash}. Then, we consider each pair of completions with incompatible types. For each such pair, we construct an $18$-vertex oriented graph by identifying the two copies of $B$. We then apply a slightly modified version of Algorithm~\ref{algo:dsmash} to check that orienting the $25$ missing arcs only yields $4$-colorable tournaments. 

For each of the $258$ choices for $B$, this algorithm's running time may take up to roughly ten days on a standard computer. This directly yields a parallel algorithm (using one core per choice of $B$), which concludes in several years of total computation time. In the following, we present a deeper analysis of the tournaments, which allows us to design a faster algorithm. 

To this end, we separate two new cases: either $T$ contains precisely two disjoint copies of $TT_5$, or it contains three of them. We handle the former case with the current approach, and the latter in the next subsection. This allows for a faster treatment since we only need to consider the $94$ cases where $B$ is $TT_5$-free. Moreover, it makes also Algorithm~\ref{algo:dsmash} run faster since we can cut branches as soon as we find two disjoint copies of $TT_5$. 

We provide a sequential implementation of this procedure in the file~\href{https://github.com/tpierron/5chromatictournaments/blob/main/section62.ml}{section62.ml}. For each oriented graph, our program takes between a few hours and a few days on a standard computer and outputs no tournament. Note that this part can again be easily parallelized. Therefore, we get the following.

\begin{lemma}
If there is a tournament $T$ on $18$ vertices that is $5$-chromatic then $T$ contains three pairwise disjoint copies of $TT_5$.
\end{lemma}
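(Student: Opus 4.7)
The plan is to assume for contradiction that $T$ is a $5$-chromatic tournament on $18$ vertices that contains exactly two disjoint copies of $TT_5$, say $A_1$ and $A_2$, and to let $B = T \setminus (A_1 \cup A_2)$ be the remaining $8$-vertex subtournament. Under this hypothesis $B$ must be $TT_5$-free (otherwise together with $A_1$ and $A_2$ we would obtain three pairwise disjoint copies) and $3$-chromatic (otherwise a proper $2$-coloring of $B$ could be combined with one color per $A_i$ to give a $4$-coloring of $T$). These two constraints restrict $B$ to one of only $94$ non-isomorphic tournaments, which I would enumerate once and for all using \texttt{nauty}.

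For each such candidate $B$, I would run Algorithm~\ref{algo:dsmash} starting from the disjoint union of $B$ and a fresh $TT_5$, with two trimming rules: (i) cut the branch whenever the current partially oriented graph already admits a partition into three transitive tournaments, and (ii) cut the branch whenever it contains two vertex-disjoint copies of $TT_5$ (since the second such copy together with $A_1,A_2$ would contradict our standing assumption). This produces all $13$-completions of $B$, i.e.\ all $4$-chromatic $13$-vertex tournaments obtained by gluing $B$ with a copy of $TT_5$. For each completion $C$ I would compute its \emph{type}, namely the family of subsets $B' \subseteq B$ with $3 \leq |B'| \leq 5$ for which $(C \setminus B) \cup B'$ is $2$-colorable. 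The key structural observation, analogous to Lemma~\ref{lem:2col}, is that if our $T$ exists then the types of the $13$-completions $A_1 \cup B$ and $A_2 \cup B$ must be \emph{incompatible}: a compatible pair would yield a partition $B = B_1 \cup B_2$ with each $A_i \cup B_i$ being $2$-colorable, hence a $4$-coloring of $T$.

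For every ordered pair of $13$-completions with incompatible types, I would glue them along their common copy of $B$ to form an $18$-vertex oriented graph in which exactly the $25$ arcs between $A_1$ and $A_2$ remain unoriented, and then invoke a final run of Algorithm~\ref{algo:dsmash} (with the same two trimming rules) to enumerate all orientations of those $25$ arcs. The lemma is established by verifying that none of these orientations produces a $5$-chromatic tournament; an empty output over all $94$ choices of $B$ and all incompatible pairs of completions contradicts the existence of $T$.

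The essential obstacle is computational: without pruning there are on the order of $94 \cdot (\text{many})^2 \cdot 2^{25}$ tournaments to examine, and even a single $4$-colorability test is NP-hard. Three ingredients make the search feasible. First, restricting attention to $TT_5$-free $B$ (rather than all $258$ three-chromatic $8$-vertex tournaments) reduces the problem by a large factor and allows the stronger trimming rule~(ii). Second, the incremental maintenance of the list of maximal transitive subtournaments described at the end of Section~\ref{sec:12} lets the recursive $3$- and $4$-colorability tests run in amortized time linear in the depth of the branching, rather than from scratch at each node. Third, filtering pairs of $13$-completions by type incompatibility before attempting any gluing discards the overwhelming majority of candidate pairs cheaply. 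I expect the main technical effort to be in engineering these data structures so that the per-branch cost is small enough to let the overall computation finish in reasonable wall-clock time when parallelized over the $94$ choices of $B$.
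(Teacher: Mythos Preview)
Your proposal is correct and follows essentially the same approach as the paper's Section~\ref{subsec:62}: restrict to the $94$ $TT_5$-free $3$-chromatic $8$-vertex tournaments $B$, generate $13$-completions via Algorithm~\ref{algo:dsmash} with the extra trimming on disjoint $TT_5$'s, filter pairs by type incompatibility, then exhaustively orient the $25$ remaining arcs. The only omission is that you should explicitly invoke Lemma~\ref{lem:2TT5in18} to reduce the negation ``fewer than three disjoint $TT_5$'' to ``exactly two disjoint $TT_5$''; with that added, your plan matches the paper's computer-assisted argument in all essential details.
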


\subsection{$T$ has three disjoint $TT_5$}
\label{subsec:63}
Let $A_1,A_2,A_3$ be three disjoint copies of $TT_5$ in $T$, and $B$ be the set of the three remaining vertices (that must induce a directed triangle). Note that for each $i$, $A_i\cup B$ induces a $3$-chromatic tournament on $8$ vertices that contains a $TT_5$. Moreover, for every $i\neq j$, $A_i\cup A_j\cup B$ induces a $4$-chromatic tournament on $13$-vertices. 

Similarly to the previous section, our goal is to generate the candidates for $A_i\cup B$, then for $A_i\cup A_j\cup B$, and finally for $T$. We again rely on the notion of completion. An \emph{8-completion} is a $3$-chromatic tournament on $8$ vertices together with a fixed copy of $TT_5$ in it. Two $8$-completions are isomorphic if there is an isomorphism between the tournaments that fixes the distinguished copies of $TT_5$. Equivalently, this means that one can be obtained from the other by a circular permutation of the three vertices that are not in the distinguished $TT_5$ (which is a triangle).

There are $256$ non-isomorphic $8$-completions. Note that it is not surprising that this number is larger than the number of $3$-chromatic tournaments on $8$ vertices containing $TT_5$ since such a tournament may actually contain several copies of $TT_5$.

For every pair $(C,C')$ of $8$-completions,
we consider the $13$-vertex oriented graphs obtained by identifying
the three non-distinguished vertices of $C$ with the three non-distinguished vertices of $C'$.
More precisely, the set of the non-distinguished vertices induces a directed triangle in $C$ and in $C'$,
this yields three possible ways for making the identification, and hence we obtain three $13$-vertex oriented graphs.
For each of these $13$-vertex oriented graphs,
we use Algorithm~\ref{algo:dsmash} which outputs all arc-extensions to $4$-chromatic tournaments
(each tournament is obtained by adding $25$ arcs).
The resulting tournaments are candidates for $A_i\cup A_j\cup B$, and we call them $13$-completions of $(C,C')$. 

We may now generate all candidates for $T$. We consider three 8-completions $C_1,C_2,C_3$, then generate all $13$-completions $C_{12}$ (resp. $C_{13}, C_{23}$) of $(C_1,C_2)$ (resp. $(C_1,C_3),(C_2,C_3)$) again with Algorithm~\ref{algo:dsmash}. We finally construct a 18-vertex tournament by identifying the vertices inducing $C_1$ in $C_{12}$ and $C_{13}$, those inducing $C_2$ in $C_{12}$ and $C_{23}$ and those inducing $C_3$ in $C_{13}$ and $C_{23}$ (see Figure~\ref{fig:glue}). We then compute the chromatic number of each such $18$-vertex tournament. Each time, this chromatic number is $4$, which concludes the proof of Theorem~\ref{thm:18}. The corresponding program can be found in~\href{https://github.com/tpierron/5chromatictournaments/blob/main/section63.ml}{section63.ml}. We provide here a sequential implementation of this procedure, which runs in roughly six months on a standard computer. However, note that the computations of the $13$-completions can be parallelized (and so can be the $4$-colorability check for the $18$-vertex candidates), so the program could run much faster if several cores are available.

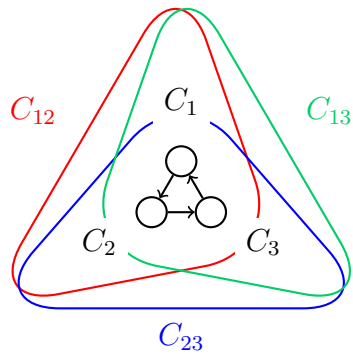
\begin{figure}[!ht]
\centering
\begin{tikzpicture}[thick, scale=1.5]
\draw [rounded corners=10mm,red] (-30:1)--(90:2)--(210:2)--cycle;
\node[red] at (150:1.5) {$C_{12}$};
\draw [rounded corners=10mm,blue] (90:1)--(330:2)--(210:2)--cycle;
\node[blue] at (270:1.25) {$C_{23}$};
\draw [rounded corners=10mm,ForestGreen] (-150:1)--(330:2)--(90:2)--cycle;
\node [ForestGreen] at (30:1.5) {$C_{13}$};
\node[draw, circle] (A) at (90:.3) {};
\node[draw, circle] (B) at (210:.3) {};
\node[draw, circle] (C) at (330:.3) {};
\draw[->] (A) to (B);
\draw[->] (B) to (C);
\draw[->] (C) to (A);
\node[fill=white] at (-30:.83) {$C_3$};
\node[fill=white] at (90:.83) {$C_1$};
\node[fill=white] at (-150:.83) {$C_2$};
\end{tikzpicture}
\caption{The structure of $T$.}\label{fig:glue}
\end{figure}

\bibliographystyle{alpha}

\end{document}